\newtheorem{theorem}{Theorem}
\newtheorem{lemma}[theorem]{Lemma}
\newtheorem{assumption}[theorem]{Assumption}
\newcommand{\ceil}[1]{\ensuremath{\lceil#1\rceil}}
\newcommand{\Ceil}[1]{\ensuremath{\left\lceil#1\right\rceil}}
\newcommand{\lrA}[1]{\ensuremath{\left(#1\right)}}
\newcommand{\lrB}[1]{\ensuremath{\left[#1\right]}}
\def\OPT{\mbox{OPT}}
\newcommand{\EE}[1]{\ensuremath{\mathbb{E}[#1]}}
\newcommand{\EEE}[1]{\ensuremath{\mathbb{E}\lrB{#1}}}
\title
{
A 5-approximation Algorithm for the Traveling Tournament Problem
}
\author
{
Jingyang Zhao\footnote{University of Electronic Science and Technology of China. Email: \texttt{jingyangzhao1020@gmail.com}.}
\and
Mingyu Xiao\footnote{University of Electronic Science and Technology of China. Email: \texttt{myxiao@gmail.com}.}
}
\date{}
\begin{document}

\maketitle

\begin{abstract}
The Traveling Tournament Problem (TTP-$k$) is a well-known benchmark problem in tournament timetabling, which asks us to design a double round-robin schedule such that the total traveling distance of all $n$ teams is minimized under the constraints that each pair of teams plays one game in each other's home venue, and each team plays at most $k$-consecutive home games or away games. Westphal and Noparlik (Ann. Oper. Res. 218(1):347-360, 2014) claimed a $5.875$-approximation algorithm for all $k\geq 4$ and $n\geq 6$.
However, there were both flaws in the construction of the schedule and in the analysis.
In this paper, we show that there is a 5-approximation algorithm for all $k$ and $n$. Furthermore, if $k \geq n/2$, the approximation ratio can be improved to 4.

\medskip
{
\noindent\bf{Keywords}: \rm{Approximation Algorithms, Sports Scheduling, Traveling Tournament Problem, Timetabling, Combinatorial Optimization}
}
\end{abstract}

\section{Introduction}
In the field of sports scheduling~\cite{kendall2010scheduling}, the traveling tournament problem (TTP-$k$) is a well-known benchmark problem that was first systematically introduced in~\cite{easton2001traveling}. This problem aims to find a double round-robin tournament satisfying some constraints, minimizing the total distance traveled by all participant teams. In a double round-robin tournament of $n$ teams, each team will play 2 games against each of the other $n-1$ teams, one home game at its home venue and one away game at its opponent's home venue. Additionally, each team should play one game a day, all games need to be scheduled on $2(n-1)$ consecutive days, and so there are exactly $n/2$ games on each day. According to the definition, we know that $n$ is always even. For TTP-$k$, we have the following three basic constraints or assumptions on the double round-robin tournament.

\medskip
\noindent
\textbf{Traveling Tournament Problem (TTP-$k$)}
\begin{itemize}
\item \emph{No-repeat}: Two teams cannot play against each other on two consecutive days.
\item \emph{Direct-traveling}: Before the first game starts, all teams are at home, and they will return home after the last game ends. Furthermore, a team travels directly from its game venue on the $i$-th day to its game venue on the $(i+1)$-th day.
\item \emph{Bounded-by-$k$}: Each team can have at most $k$-consecutive home games or away games.
\end{itemize}

The smaller the value of $k$, the more frequently a team has to return home. By contrast, if $k$ is very large, say $k=n-1$, then the bounded-by-$k$ constraint loses meaning and a team can schedule their travel distance as short as that in the traveling salesman problem (TSP).

The input of TTP-$k$ is a complete graph where each vertex represents a team and the distance between two vertices $i$ and $j$, denoted by $d(i,j)$, is the distance from the home of team $i$ to the home of team $j$. In this paper, we only consider the case that the distance function $d$ satisfies the symmetry and triangle inequality properties, i.e., $d(i,j) = d(j,i)$ and $d(i,h) \leq d(i,j) + d(j,h)$ for all $1\leq i,j,h\leq n$.

TTP-$k$ is a difficult optimization problem. Readers can refer to \cite{DBLP:journals/eor/BulckGSG20,duran2021sports} for an overview. For the case of $k=1$, the problem is infeasible~\cite{de1988some}. For $3\leq k=O(1)$ or $k=n-1$, the NP-hardness has been established~\cite{bhattacharyya2016complexity, thielen2011complexity, DBLP:journals/corr/abs-2110-02300}. Recently, Bendayan \emph{et al.}~\cite{2022apx} further proved the APX-hardness for $k=n-1$. Although the hardness of TTP-2 is still not formally proved, it is believed that TTP-2 is also hard since it is not easy to construct a feasible solution to it. In the literature, there is a large number of contributions on approximation algorithms~\cite{thielen2012approximation,DBLP:conf/mfcs/XiaoK16,DBLP:conf/ijcai/ZhaoX21,DBLP:conf/cocoon/ZhaoX21,DBLP:conf/atmos/ChatterjeeR21,imahori20211+,miyashiro2012approximation,2012LDTTP,hoshino2013approximation} and heuristic algorithms~\cite{easton2003solving,lim2006simulated,anagnostopoulos2006simulated,di2007composite,goerigk2014solving,DBLP:journals/anor/GoerigkW16}.

For heuristic algorithms, most known works are concerned with the case of $k=3$. Since the search space is usually very large, many instances of TTP-3 with more than 10 teams in the online benchmark \cite{trick2007challenge,DBLP:journals/eor/BulckGSG20} have not been completely solved even by using high-performance machines.

In terms of approximation algorithms, almost all results are based on the assumption that the distance holds the symmetry and triangle inequality properties. This is natural and practical in the sports schedule. For $k=2$, the approximation ratios for even $n/2$ and odd $n/2$ have been improved to $(1+3/n)$ and $(1+5/n)$ \cite{2023ttp2}. For $k=3$ and $k=4$, the best approximation ratios are $(139/87+\varepsilon)$ and $(17/10+\varepsilon)$, respectively~\cite{zhao2022improved}. For $5\leq k=o(n)$, the ratio is $(5k-7)/(2k)+O(k/n)$~\cite{yamaguchi2009improved}. For $k=n-1$, Imahori~\emph{et al.}~\cite{imahori2010approximation} proposed a $2.75$-approximation algorithm. At the same time, Westphal and Noparlik~\cite{westphal2014} claimed a $5.875$-approximation algorithm for all $k\geq 4$ and $n\geq 6$. However, there were two flaws in the algorithm and analysis.
The first one is in the construction of the schedule which can lead to the output of an infeasible solution. The second one is in the analysis of the total cost which can lead to a worse approximation ratio.
To explain the flaws clearly, we give the details in the appendix.

In this paper, we will show that there is
a 5-approximation algorithm for TTP-$k$ with all $k$ and $n$. The approximation ratio can be further improved to 4 if $k \geq n/2$. Our algorithm uses a similar construction in~\cite{westphal2014}.
We will avoid the flaw and introduce the algorithm in a randomized way, which will simplify our analysis. For the analysis, we refined some lower bounds and analyze the solution quality in a different way by using the refined bounds.

Our algorithm will be introduced in Section~\ref{sec_2}.
In Section~\ref{mainalgorithm}, we show the framework of our approximation algorithm, and the randomzied algorithm to generate our parameters.
In Section~\ref{construction}, we explain the main construction algorithm in detail and prove its feasibility.
In Section~\ref{analysis}, we analyze the approximation quality of our algorithm. Specifically, in Section~\ref{bounds} we propose some useful bounds, in Section \ref{ouranalysis} we do some analysis, and in Section \ref{ourratio}, we improve the approximation ratio. Section \ref{conclusion} makes some concluding remarks.
We also show the flaw in the previous construction in Appendix~\ref{previousconstruction} and the flaw in the previous analysis in Appendix~\ref{previousanalysis}.

\section{The Algorithm}\label{sec_2}
Let $n$ denote the number of teams, where $n$ is even. The set of $n$ teams is denoted by $\{t_1,t_2,\dots, t_n\}$. We use $G=(V, E)$ to denote the complete graph on $n$ vertices $\{1,2,\dots,n\}$ representing the $n$ teams. There is a distance/length function $d: E\to \mathbb{R}_{\geq0}$ on the edges of $G$. The distance of edge $ij$, denoted by $d(i,j)$, is the distance between the homes of teams $t_i$ and $t_j$. We also let $s(i)=\sum_{j\neq i}d(i,j)$, i.e., the total distance of edges incident on vertex $i$ in $G$, and let $\Delta=\sum_is(i)$. Given any (Hamiltonian) cycle $T$ of graph $G$, we use $d(T)$ to denote the length of $T$, i.e., the total length of edges on $T$. We use $T^*$ to denote an optimal Hamiltonian cycle, i.e., the Hamiltonian cycle with a minimum length.
In this paper, we consider that $k$ is a part of the input. Recall that TTP-1 is infeasible~\cite{de1988some}, and for TTP-$k$ with $k\geq n$, the problem is equivalent with TTP-$(n-1)$. Hence, we assume w.l.o.g. that $2\leq k<n$.

\subsection{The Algorithm Framework}\label{mainalgorithm}
Assume that we are given a Hamiltonian cycle $T$ of graph $G$, which can be computed by a polynomial-time approximation algorithm. Let $\sigma: V\leftrightarrow\{1,\dots,n\}$ be a permutation of the $n$ teams, i.e., a bijection label function that maps the $n$ teams. Specifically, we assume that the team represented by the number $i$ is labeled as $t_i$. Let $l\in \{1,2,\dots,k\}$
be a parameter. Given $\sigma$ and $l$, we will construction a feasible schedule for the teams. The construction algorithm is denoted by  CONS($\sigma$,$l$). We delay the description of CONS($\sigma$,$l$) to the next subsection.

For every different $\sigma$ and $l$ we can get a feasible solution. To get a solution, we want to find good parameters $\sigma$ and $l$.
We will use a simple randomized algorithm to generate the two parameters. The algorithm can be easily derandomized in polynomial time and it will simplify the analysis.

The randomized algorithm contains the following three steps.

\medskip

\noindent\textbf{Step~1.} Select one vertex from $T$ such that the total distance of edges incident on it is minimized, and label it as $n$.

\noindent\textbf{Step~2.} Let $T'$ be the cycle obtained by shortcutting the vertex $n$ from $T$. W.l.o.g., orient $T'$ with an arbitrary direction. To label the vertices of $T'$, we choose one vertex of $T'$ uniformly at random, label it as $1$, and then label the vertices $2,3,\dots, n-1$ following the orientation of $T'$, respectively. Note that there are $n-1$ cases to label the oriented cycle $T'$.

\noindent\textbf{Step~3.} The team represented by the vertex $i$ is labeled as $t_i$.
\begin{itemize}
    \item If $k<n/2$, the parameter $l$ is taken from $\{1,2,\dots,k\}$ uniformly at random;
    \item Otherwise, let $l=n/2-1$.
\end{itemize}

The randomized algorithm will generate two randomized parameters $\sigma$ and $l$. In order to simplify the argument and analysis, we first derandomize the parameter $l$ as follows:
If $k<n/2$, compute CONS($\sigma$,$l$) for each $l\in\{1,2,\dots,k\}$, and return the best one from the $k$ schedules. The derandomization only increases a factor of $k$ in the running time.
Next, we assume that $l$ is a fixed parameter chosen as above. When we talk about expectation, it means the expectation with respect to $\sigma$.

\subsection{The Construction Algorithm}\label{construction}
In this subsection, we introduce the construction algorithm CONS($\sigma$,$l$). Given fixed $\sigma$ and $l$, we describe the construction.
Recall that the schedule contains $2(n-1)$ days, and each day contains $m\coloneqq n/2$ games. It can be split into two \emph{seasons}, where the first $n-1$ days are seen as the first season, and the other $n-1$ days are seen as the second season.

\subsubsection{The framework of the construction}
We first show the games in the first season.

On the first day of the schedule, the games are shown in Figure~\ref{fig01}. The last team $t_n$ is represented by a double-cycle node, while each of the other $n-1$ teams is represented by a single-cycle.
Within this figure, there are $m$ directed edges between teams, where a directed edge from team $t_{i'}$ to team $t_{i''}$ means a game between these two teams, with the game taking place at the home venue of team $t_{i''}$. Thus, the $n$ teams form $m$ games.
For the sake of presentation, the setting of the directions of these edges is explained later. We can currently ignore the boxes in the figure now.

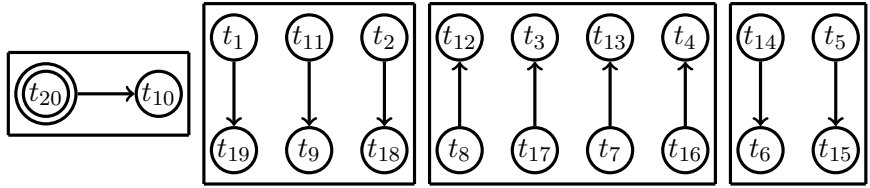
\begin{figure}[ht]
\centering
\begin{tikzpicture}
[
leftsuperteam/.style={circle, draw=black!100, very thick, minimum size=6mm, inner sep=0pt},
normalsuperteam/.style={circle, draw=black!100, very thick, minimum size=6mm, inner sep=0pt},
rightsuperteam/.style={draw=black!100, very thick, minimum size=6mm, inner sep=0pt},
cyc/.style={circle, draw=black!100, very thick, minimum size=8mm, inner sep=0pt},
]
\draw[very thick] (1.4,0.55) to (1.4,-0.55);
\draw[very thick] (-1,0.55) to (-1,-0.55);
\draw[very thick] (-1,0.55) to (1.4,0.55);
\draw[very thick] (-1,-0.55) to (1.4,-0.55);
\draw[very thick] (1.6,1.2) to (4.4,1.2);
\draw[very thick] (1.6,-1.2) to (4.4,-1.2);
\draw[very thick] (1.6,1.2) to (1.6,-1.2);
\draw[very thick] (4.4,1.2) to (4.4,-1.2);
\draw[very thick] (4.6,1.2) to (8.4,1.2);
\draw[very thick] (4.6,-1.2) to (8.4,-1.2);
\draw[very thick] (4.6,1.2) to (4.6,-1.2);
\draw[very thick] (8.4,1.2) to (8.4,-1.2);
\draw[very thick] (8.6,1.2) to (10.4,1.2);
\draw[very thick] (8.6,-1.2) to (10.4,-1.2);
\draw[very thick] (8.6,1.2) to (8.6,-1.2);
\draw[very thick] (10.4,1.2) to (10.4,-1.2);
\node[leftsuperteam]        at (-0.5,0) {$t_{20}$};
\node[cyc]        (l)   at (-0.5,0) {};
\node[normalsuperteam]      (r)   at (1,0) {$t_{10}$};
\draw[very thick,->,above] (l.east) to (r.west);
\foreach \i in {2}
{
\node[normalsuperteam]      (u)    at (\i, 0.75) {$t_{1}$};
\node[normalsuperteam]      (d)    at (\i, -0.75) {$t_{19}$};
\draw[very thick,<-] (d.north) to (u.south);
}
\foreach \i in {3}
{
\node[normalsuperteam]      (u)    at (\i, 0.75) {$t_{11}$};
\node[normalsuperteam]      (d)    at (\i, -0.75) {$t_9$};
\draw[very thick,<-] (d.north) to (u.south);
}
\foreach \i in {4}
{
\node[normalsuperteam]      (u)    at (\i, 0.75) {$t_{2}$};
\node[normalsuperteam]      (d)    at (\i, -0.75) {$t_{18}$};
\draw[very thick,<-] (d.north) to (u.south);
}
\foreach \i in {5}
{
\node[normalsuperteam]      (u)    at (\i, 0.75) {$t_{12}$};
\node[normalsuperteam]      (d)    at (\i, -0.75) {$t_8$};
\draw[very thick,<-] (u.south) to (d.north);
}
\foreach \i in {6}
{
\node[normalsuperteam]      (u)    at (\i, 0.75) {$t_{3}$};
\node[normalsuperteam]      (d)    at (\i, -0.75) {$t_{17}$};
\draw[very thick,<-] (u.south) to (d.north);
}
\foreach \i in {7}
{
\node[normalsuperteam]      (u)    at (\i, 0.75) {$t_{13}$};
\node[normalsuperteam]      (d)    at (\i, -0.75) {$t_7$};
\draw[very thick,<-] (u.south) to (d.north);
}
\foreach \i in {8}
{
\node[normalsuperteam]      (u)    at (\i, 0.75) {$t_{4}$};
\node[normalsuperteam]      (d)    at (\i, -0.75) {$t_{16}$};
\draw[very thick,<-] (u.south) to (d.north);
}
\foreach \i in {9}
{
\node[normalsuperteam]      (u)    at (\i, 0.75) {$t_{14}$};
\node[normalsuperteam]      (d)    at (\i, -0.75) {$t_6$};
\draw[very thick,<-] (d.north) to (u.south);
}
\foreach \i in {10}
{
\node[normalsuperteam]      (u)    at (\i, 0.75) {$t_{5}$};
\node[normalsuperteam]      (d)    at (\i, -0.75) {$t_{15}$};
\draw[very thick,<-] (d.north) to (u.south);
}
\end{tikzpicture}
\caption{The schedule on the first day, where $n=20$, $k=4$, $l=2$, and $b=4$}
\label{fig01}
\end{figure}

On the second day of the schedule, the position of the double-cycle node representing team $t_n$ remains unchanged. However, the positions of the $n-1$ single-cycle nodes representing the other teams are altered by shifting them one position in the counterclockwise direction.
An illustration of the schedule on the second day is shown in Figure~\ref{fig02}.

\begin{figure}[ht]
\centering
\begin{tikzpicture}
[
leftsuperteam/.style={circle, draw=black!100, very thick, minimum size=6mm, inner sep=0pt},
normalsuperteam/.style={circle, draw=black!100, very thick, minimum size=6mm, inner sep=0pt},
rightsuperteam/.style={draw=black!100, very thick, minimum size=6mm, inner sep=0pt},
cyc/.style={circle, draw=black!100, very thick, minimum size=8mm, inner sep=0pt},
]
\draw[very thick] (1.4,0.55) to (1.4,-0.55);
\draw[very thick] (-1,0.55) to (-1,-0.55);
\draw[very thick] (-1,0.55) to (1.4,0.55);
\draw[very thick] (-1,-0.55) to (1.4,-0.55);
\draw[very thick] (1.6,1.2) to (4.4,1.2);
\draw[very thick] (1.6,-1.2) to (4.4,-1.2);
\draw[very thick] (1.6,1.2) to (1.6,-1.2);
\draw[very thick] (4.4,1.2) to (4.4,-1.2);
\draw[very thick] (4.6,1.2) to (8.4,1.2);
\draw[very thick] (4.6,-1.2) to (8.4,-1.2);
\draw[very thick] (4.6,1.2) to (4.6,-1.2);
\draw[very thick] (8.4,1.2) to (8.4,-1.2);
\draw[very thick] (8.6,1.2) to (10.4,1.2);
\draw[very thick] (8.6,-1.2) to (10.4,-1.2);
\draw[very thick] (8.6,1.2) to (8.6,-1.2);
\draw[very thick] (10.4,1.2) to (10.4,-1.2);
\node[leftsuperteam]        at (-0.5,0) {$t_{20}$};
\node[cyc]        (l)   at (-0.5,0) {};
\node[normalsuperteam]      (r)   at (1,0) {$t_{1}$};
\draw[very thick,->,above] (l.east) to (r.west);

\foreach \i in {2}
{
\node[normalsuperteam]      (u)    at (\i, 0.75) {$t_{11}$};
\node[normalsuperteam]      (d)    at (\i, -0.75) {$t_{10}$};
\draw[very thick,<-] (d.north) to (u.south);
}
\foreach \i in {3}
{
\node[normalsuperteam]      (u)    at (\i, 0.75) {$t_{2}$};
\node[normalsuperteam]      (d)    at (\i, -0.75) {$t_{19}$};
\draw[very thick,<-] (d.north) to (u.south);
}
\foreach \i in {4}
{
\node[normalsuperteam]      (u)    at (\i, 0.75) {$t_{12}$};
\node[normalsuperteam]      (d)    at (\i, -0.75) {$t_{9}$};
\draw[very thick,<-] (d.north) to (u.south);
}
\foreach \i in {5}
{
\node[normalsuperteam]      (u)    at (\i, 0.75) {$t_{3}$};
\node[normalsuperteam]      (d)    at (\i, -0.75) {$t_{18}$};
\draw[very thick,<-] (u.south) to (d.north);
}
\foreach \i in {6}
{
\node[normalsuperteam]      (u)    at (\i, 0.75) {$t_{13}$};
\node[normalsuperteam]      (d)    at (\i, -0.75) {$t_{8}$};
\draw[very thick,<-] (u.south) to (d.north);
}
\foreach \i in {7}
{
\node[normalsuperteam]      (u)    at (\i, 0.75) {$t_{4}$};
\node[normalsuperteam]      (d)    at (\i, -0.75) {$t_{17}$};
\draw[very thick,<-] (u.south) to (d.north);
}
\foreach \i in {8}
{
\node[normalsuperteam]      (u)    at (\i, 0.75) {$t_{14}$};
\node[normalsuperteam]      (d)    at (\i, -0.75) {$t_{7}$};
\draw[very thick,<-] (u.south) to (d.north);
}
\foreach \i in {9}
{
\node[normalsuperteam]      (u)    at (\i, 0.75) {$t_{5}$};
\node[normalsuperteam]      (d)    at (\i, -0.75) {$t_{16}$};
\draw[very thick,<-] (d.north) to (u.south);
}
\foreach \i in {10}
{
\node[normalsuperteam]      (u)    at (\i, 0.75) {$t_{15}$};
\node[normalsuperteam]      (d)    at (\i, -0.75) {$t_{6}$};
\draw[very thick,<-] (d.north) to (u.south);
}
\end{tikzpicture}
\caption{The schedule on the second day, where $n=20$, $k=4$, $l=2$, and $b=4$}
\label{fig02}
\end{figure}
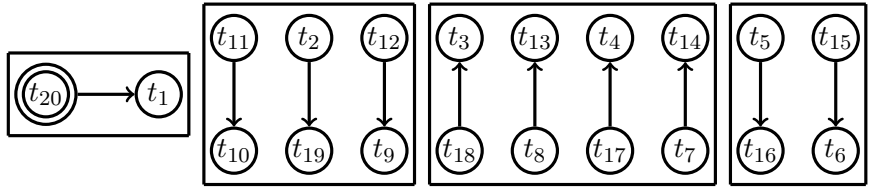

Analogously, we obtain the games on rest days in the first season based on the rotation scheme.

The first season can be presented by $\Gamma_1\cdot \Gamma_2\cdots \Gamma_{n-2}\cdot \Gamma_{n-1}$, where $\Gamma_i$ denotes the $m$ games on the $i$-th day. In the first season, it is easy to see that every pair of teams has played exactly once. Let $\overline{\Gamma_i}$ denote the $m$ games of $\Gamma_i$ but with reversed home venues. Then, there are still $n-1$ days of unarranged games $\{\overline{\Gamma_1}, \overline{\Gamma_2},\dots,\overline{\Gamma_{n-1}}\}$, which will be arranged in the second season in the order of $\overline{\Gamma_{n-2}\cdot \Gamma_{n-1}\cdot \Gamma_1\cdot \Gamma_2\cdots \Gamma_{n-3}}$.
Therefore, the complete schedule can be presented by
\[
\Gamma_1\cdot \Gamma_2\cdots \Gamma_{n-2}\cdot \Gamma_{n-1}\cdot \overline{\Gamma_{n-2}\cdot \Gamma_{n-1}\cdot \Gamma_1\cdot \Gamma_2\cdots \Gamma_{n-3}}.
\]

\subsubsection{The details of the construction}
Recall that $l\in\{1,2,\dots,k\}$ is an input of the algorithm. Let $b\coloneqq\ceil{\frac{m-l-1}{k}}+2$.
To determine the directions of the edges in the schedule, we divide them into $b$ blocks, as shown in Figures~\ref{fig01} and \ref{fig02}.
Each block contains a specific number of edges, known as its width. We denote the width of the $i$-th block, from left to right, as $w_i$.
In our construction, we set $w_1=1$, $w_2=m-1-(b-3)k-l$, $w_i=k$ for $i=3,\dots, b-1$, and $w_b=l$. Note that the sum of the widths of all the blocks equals the total number of edges, i.e., $\sum_{i=1}^{b}w_i=m$.

When $k<n/2$, with $l\in\{1,2,\dots,k\}$, there are $k$ cases to consider. On the other hand, when $k\geq n/2$, we only need to consider the case where $l=n/2-1$. In this specific case, there are only two blocks: the first block with width $w_1=1$ and the second block with width $w_2=m-1$. Additionally, if $(m-1-l)\bmod k=0$, we have $w_2=k$; otherwise, we have $w_2=(m-1-l)\bmod k$.

Note that the directions of the edges in the second season are arranged based on the first season. Hence, we only explain the directions of the edges in the first season. There are two rules.

\textbf{Rule 1.} Initially, the directions of edges in the same block typically are the same, and the direction in each block alternatively changes from the $2$-nd block to the $b$-th block. Moreover, the direction of the edge in the 1-st block is from left to right while the directions of the edges in the 2-nd block are from above to below.
This arrangement ensures that, in Figure~\ref{fig01}, the directions of the edges in the 3-rd block are from below to above, the directions of the edges in the 4-th block are from above to below, and so on.
Since every $i$-th block with $3\leq i\leq b-1$ has a width of $k$, this can make sure that every team plays $k$-consecutive home/away games from the beginning of the entry to the end of the exit within one season. And, it helps prevent any team from playing more than $k$-consecutive home/away games since the direction in each block alternatively changes.
A special case is that
\begin{itemize}
    \item if $w_2=k$, we will reverse the direction of the most left edge in the 2-nd block, i.e., the edge $t_1\rightarrow t_{19}$ in Figure~\ref{fig01}, for all days in the first season.
\end{itemize}
Therefore, there are at most $k-1$ edges with the same direction in the 2-nd block. One can imagine that the 2-nd block is further divided into two blocks in this case.

\textbf{Rule 2.} Then, we make sure that the direction of the most left edge, which is incident on team $t_n$, changes every $k$-th day, while the directions of the remaining $m-1$ edges remain unchanged. This means that for the edge $t_{20}\rightarrow t_{10}$ in Figure~\ref{fig01}, the direction is from left to right on the 1-st day to the $k$-th day, from right to left on the $(k+1)$-th day to the $2k$-th day, and so on.
In the specific case where $k=4$, it is noteworthy that the directions of the edges on the second day, as shown in Figure~\ref{fig02}, are the same as those on the first day depicted in Figure~\ref{fig01}.
Moreover, let $r=(n-1)\bmod 2k$. A special case is that
\begin{itemize}
    \item if $r\leq k<n-1$, we will reverse the most left edge only on the 1-st day.
\end{itemize}
The reason is that we need to ensure that team $t_n$ dose not play more than $k$-consecutive home/away games during the transition between the first and second seasons.

\subsubsection{The feasibility of the construction}
We first show that in the construction all teams will not play more than $k$-consecutive home/away games.

\begin{lemma}\label{feasibility1}
No teams play more than $k$-consecutive home/away games.
\end{lemma}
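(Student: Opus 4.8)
The plan is to analyze the home/away pattern of each team separately, distinguishing the ``last team'' $t_n$ from the other $n-1$ teams, and within a single season first, then across the season boundary. For a team $t_i$ with $i \neq n$, its position among the $m$ games on each day shifts by one (counterclockwise) as the days progress, so over the $n-1$ days of the first season it visits every slot of the rotation diagram exactly once. The home/away status of $t_i$ on a given day is determined by whether the directed edge at $t_i$'s current slot points toward $t_i$ (away) or away from $t_i$ (home), together with the orientation of that slot. Since Rule~1 fixes the slot orientations per block with alternating signs between consecutive blocks, and each interior block $3\le i\le b-1$ has width exactly $k$ (with the width-$(b-2)$... more precisely, $w_2 \le k-1$ after the special-case reversal, $w_b = l \le k$, and $w_1 = 1$), a team passing through the rotation encounters at most $k$ consecutive slots of the same orientation before the orientation flips. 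I would make this precise by tracking, as $t_i$ walks through slots $1,2,\dots$ over successive days, the maximal runs of constant home/away value: the key point is that whenever $t_i$ crosses from one block into the next, the edge direction reverses relative to the walk, so a run cannot exceed the width of a single block, hence cannot exceed $k$.

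Next I would handle team $t_n$, which stays in the leftmost slot (block $1$, width $1$) on every day. By Rule~2 the single edge incident on $t_n$ reverses direction every $k$-th day, so within the first season $t_n$ plays at most $k$ consecutive home games followed by at most $k$ consecutive away games, alternating; the possible fencepost issue at the start and end of the season is exactly what the ``$r = (n-1)\bmod 2k$'' special case (reverse the leftmost edge only on day $1$) is designed to fix, and I would verify that with this correction the run of $t_n$ straddling day $n-1$ and day $1$ of the second season has length at most $k$.

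The remaining and, I expect, most delicate step is the transition between the two seasons. The second season runs the days in the order $\overline{\Gamma_{n-2}},\overline{\Gamma_{n-1}},\overline{\Gamma_1},\dots,\overline{\Gamma_{n-3}}$, i.e.\ with home venues reversed and the days cyclically rotated by two. For a team $t_i$ with $i\neq n$, the last few days of season~1 (slots near the end of the walk) are immediately followed by the reversed versions of $\Gamma_{n-2},\Gamma_{n-1}$; because the home/away value flips under the bar operation, one must check that the concatenation of the tail run of season~1 with the head run of season~2 still has length at most $k$. This is precisely where the block widths $w_1=1$, $w_b=l$, and the choice $b = \lceil (m-l-1)/k\rceil + 2$ enter: the parameter $l$ controls the width of the last block so that the ``seam'' runs on both sides of the boundary combine to at most $k$. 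I would argue this by writing out, for each team, the length of the home/away run containing the boundary day, splitting into cases according to which block that team occupies on days $n-2$, $n-1$ of season~1, and using that the reversed schedule shifts each team's slot consistently. The bookkeeping for $t_n$ at the seam is the $r\le k$ case already mentioned. Once all teams and all three regimes (interior of season~1, interior of season~2 — which is symmetric — and the two seams, including wrap-around at the very start/end of the whole schedule, where every team is home) are checked, the lemma follows. I would organize the writeup as: (i) a slot-walk lemma for non-$t_n$ teams giving runs $\le \max_i w_i \le k$ within a season; (ii) the $t_n$ analysis within a season via Rule~2; (iii) the two seam computations, which is the crux.
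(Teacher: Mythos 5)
Your decomposition (per-team, within-season runs bounded by block widths, then the season boundary as the crux) matches the paper's structure, and the within-season part of your plan is sound, including the observation that the second block must effectively have width at most $k-1$. But the step you yourself identify as the crux --- the seam between the two seasons for teams $t_i$ with $i\neq n$ --- is where the proposal has a genuine gap. You propose a case analysis over which block the team occupies on days $n-2$ and $n-1$, and you attribute the correctness of the seam to the block widths and the choice of $l$. That attribution is wrong: $l$ and the block structure play no role at the seam (indeed $l$ ranges over all of $\{1,\dots,k\}$ and is chosen purely to optimize cost, not feasibility). The paper's argument is a short, block-independent observation: since the second season opens with $\overline{\Gamma_{n-2}}\cdot\overline{\Gamma_{n-1}}$, any team whose status is (say) home on both day $n-1$ (game set $\Gamma_{n-1}$) and day $n$ (game set $\overline{\Gamma_{n-2}}$) is necessarily away on day $n-2$ (game set $\Gamma_{n-2}$) and away on day $n+1$ (game set $\overline{\Gamma_{n-1}}$), because the bar operation flips venues. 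Hence any run crossing the seam has length exactly $2\le k$, with no case analysis needed. Your sketch never carries out the verification and is guided by the wrong mechanism, so as written it does not establish the claim.

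A second, smaller issue concerns $t_n$. You frame the danger as ``the run of $t_n$ straddling day $n-1$ and day $1$ of the second season,'' but the run that actually breaks the unrepaired construction (length $k+2$ when $r=(n-1)\bmod 2k\le k$) lies entirely \emph{inside} the second season: it spans $\overline{\Gamma_{n-2}}\cdot\overline{\Gamma_{n-1}}\cdot\overline{\Gamma_1}\cdots$, because reordering and complementing $t_n$'s season-one pattern can merge three runs into one. This is why the paper writes out the explicit patterns $A^kH^k\cdots$ in the two cases $r>k$ and $r\le k$ rather than arguing locally at the boundary day; your plan defers exactly this computation, which is the part that cannot be waved through.
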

\begin{proof}
Consider an arbitrary team $t_i$ with $i<n$. Recall that there are at most $k-1$ edges with the same direction in the 2-nd block. The construction is based on the rotation scheme. It is easy to verify that $t_i$ plays at most $k$-consecutive home/away games in the first season. The second season uses the same rotation scheme. Analogously, $t_i$ also plays at most $k$-consecutive home/away games in the second season. If $t_i$ plays more than $k$-consecutive home/away games, then these games must include the last day of the first season and the first day of the second season, i.e., $\Gamma_{n-1}$ and $\overline{\Gamma_{n-2}}$. Therefore, $t_i$ plays both home/away games in these two days. We assume w.l.o.g. that $t_i$ plays home games. Then, it plays away games in $\Gamma_{n-2}$ and $\overline{\Gamma_{n-1}}$. Note that these four days are arranged in the order of $\Gamma_{n-2}\cdot\Gamma_{n-1}\cdot\overline{\Gamma_{n-2}\cdot\Gamma_{n-1}}$. Hence, there are at most $2$-consecutive home games containing $\Gamma_{n-1}$ and $\overline{\Gamma_{n-2}}$, a contradiction.

Next, we consider team $t_n$. Recall that $r=(n-1)\bmod 2k$. We consider the home/away patterns of team $t_n$ in the first and the second seasons. For the sake of presentation, we use `$H^p$' and `$A^p$' to denote $p$-consecutive home games and $p$-consecutive away games, respectively. We consider the following two cases.

\textbf{Case~1: $r>k$.} Recall that the direction of the most left edge (incident on team $t_n$) changes every $k$-th day. The home/away pattern of team $t_n$ in the first season is $A^kH^k\cdots A^kH^kA^kH^{r-k}$. In the second season, if $r-k=1$, the pattern is $HAH^kA^k\cdots H^kA^kH^{k-1}$; otherwise, the pattern is $A^2H^kA^k\cdots H^kA^kH^kA^{r-k-2}$. In both cases, the patterns in these two seasons can be combined well without creating more than $k$-consecutive home/away games.

\textbf{Case~2: $r\leq k$.} Recall that we further reverse the most left edge on the 1-st day if $r\leq k<n-1$. Note that we have $r\geq 1$ since $n$ is even. If $k\geq n-1$, the pattern is obviously feasible. If $k<n-1$, since $r\leq k$, we have $n-1>2k$. Therefore, the home/away pattern of team $t_n$ in the first season is $HA^{k-1}H^k\cdots A^kH^kA^r$. In the second season, if $r=1$, the pattern is $AHAH^{k-1}A^k\cdots H^kA^{k-1}$; otherwise, the pattern is $H^2AH^{k-1}A^k\cdots H^kA^kH^{r-2}$. The patterns in these two seasons will not create more than $k$-consecutive home/away games.
\end{proof}

\begin{theorem}\label{feasibility}
The construction is feasible for any $k\geq 2$.
\end{theorem}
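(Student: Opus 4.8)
# Proof Proposal for Theorem \ref{feasibility}

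The plan is to verify all three defining constraints of TTP-$k$ for the schedule produced by CONS($\sigma$, $l$), using the structure already established. By Lemma~\ref{feasibility1}, the bounded-by-$k$ constraint is already discharged, so only two obligations remain: the schedule is a valid double round-robin (each ordered pair plays exactly once, one game per team per day, games on $2(n-1)$ days), and the no-repeat property holds. The double round-robin part should follow directly from the rotation-scheme framework: the first season $\Gamma_1\cdots\Gamma_{n-1}$ is the standard circle method on the $n-1$ single-cycle nodes with $t_n$ fixed, so by the classical argument every pair of teams meets exactly once in the first season, each team plays exactly one game each day, and the $m$ directed edges on each day form a perfect matching of the $n$ teams. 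The second season consists of exactly the reversed games $\overline{\Gamma_i}$, so every ordered pair plays exactly once overall, and the total number of days is $2(n-1)$ as required. Reversing home venues does not affect which two teams meet, so the matching/round-robin structure of the second season is inherited from the first.

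The substantive remaining work is the no-repeat constraint: no two teams face each other on two consecutive days. First I would check this within the first season: because the single-cycle nodes rotate by one position each day, the opponent of a fixed team changes each day, and one needs to confirm that the pair that meets on day $i$ is not the pair that meets on day $i+1$ — this is again a standard property of the circle method (consecutive rounds share no common edge when $n-1 \geq 3$, i.e. $n \geq 4$, which holds since $n$ is even and $n\geq 4$; the degenerate tiny cases can be noted separately). The same argument applies within the second season since it uses the identical rotation scheme in the reordered sequence $\overline{\Gamma_{n-2}\cdot\Gamma_{n-1}\cdot\Gamma_1\cdots\Gamma_{n-3}}$. The delicate points are the two ``seams'': (i) the transition from day $n-1$ (games $\Gamma_{n-1}$) to day $n$ (games $\overline{\Gamma_{n-2}}$) at the season boundary, and (ii) within the second season the junction between $\overline{\Gamma_{n-1}}$ and $\overline{\Gamma_1}$. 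At seam (i) the pairs are those of $\Gamma_{n-1}$ and those of $\Gamma_{n-2}$, which are distinct by the intra-season property; at seam (ii) the relevant unordered pairs are those of $\Gamma_{n-1}$ and $\Gamma_1$, which are non-consecutive rounds of the circle method and hence share no edge (for $n-1\geq 4$). So the chosen ordering $\Gamma_{n-2}\cdot\Gamma_{n-1}\cdot\Gamma_1\cdots\Gamma_{n-3}$ for the second season is precisely what makes the no-repeat constraint survive the seams — this is the reason that particular permutation of rounds was picked.

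I would organize the proof as: (1) cite Lemma~\ref{feasibility1} for bounded-by-$k$; (2) observe the first season is the circle method on $n$ teams (with $t_n$ the fixed ``pivot''), invoking its standard properties for exactness of the round-robin, the one-game-per-day matching, and the no-repeat within a season; (3) note the second season is the same scheme with reversed venues, so the double round-robin property and intra-season no-repeat hold there too; (4) handle the two seams explicitly by identifying which rounds meet at each junction and checking they share no unordered pair, using that $\Gamma_{n-1},\Gamma_1$ are non-adjacent rounds (this needs $n-1\geq 4$, i.e. $n\geq 6$; for $n=4$ one checks directly or notes the rounds structure). The direct-traveling constraint is a property of how distances are accounted for, not a scheduling constraint, so nothing is needed there beyond noting it is automatically respected by any schedule. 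The main obstacle is the seam analysis in step (4): one must be careful that the specific block/direction rules (especially the $w_2=k$ special case and the Rule~2 reversals) do not alter which teams meet — they only change home/away designations — so the combinatorial meeting structure is untouched and the seam argument reduces cleanly to the circle-method edge-disjointness of non-consecutive rounds.
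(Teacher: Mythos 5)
Your proposal is correct and follows essentially the same route as the paper: cite Lemma~\ref{feasibility1} for the bounded-by-$k$ constraint, use the circle-method decomposition into edge-disjoint perfect matchings for the double round-robin property, and verify that the chosen ordering of rounds never places $\Gamma_i$ adjacent to its mirror $\overline{\Gamma_i}$, which gives no-repeat. One minor simplification: your caveats about ``non-consecutive rounds'' and $n\geq 6$ are unnecessary, since \emph{any} two distinct rounds of the single round-robin are edge-disjoint as unordered matchings, so the only conceivable no-repeat violation is a round sitting next to its own reversal, and the seam checks reduce to observing that the round indices on every pair of consecutive days differ.
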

\begin{proof}
Recall that $\Gamma_i$ denotes the $m$ games of the schedule on the $i$-th day.

First, we prove that the schedule is a complete double round-robin. It is well-known that the rotation scheme in the construction can decompose a complete graph $G$ into $n-1$ edge-disjoint perfect matchings, i.e., $G=\Gamma_1\cup \Gamma_2\cup\cdots \Gamma_{n-2}\cup \Gamma_{n-1}$ if we take $\Gamma_{i'}$ as an undirected matching by taking each game $t_i\rightarrow t_j$ on that day as an edge $t_it_j$. Analogously, we can take $\Gamma_{i'}$ (resp., $\overline{\Gamma_{i'}}$) as a directed matching by taking each game $t_i\rightarrow t_j$ on that day as a directed edge $t_it_j$ (resp., $t_jt_i$), and then $\Gamma_1\cup \Gamma_2\cup\cdots \Gamma_{n-2}\cup \Gamma_{n-1}\cup \overline{\Gamma_1\cup \Gamma_2\cup\cdots \Gamma_{n-2}\cup \Gamma_{n-1}}$ forms a bi-directed complete graph. Note that the bi-directed complete graph represents the $n(n-1)$ games. Hence, all games are contained exactly once in the $2(n-1)$ days $\Gamma_1\cup \Gamma_2\cup\cdots \Gamma_{n-2}\cup \Gamma_{n-1}\cup \overline{\Gamma_1\cup \Gamma_2\cup\cdots \Gamma_{n-2}\cup \Gamma_{n-1}}$. The schedule is a complete double round-robin.

Then, we prove that the schedule satisfies the no-repeat constraint. For each pair of teams, the two games between them are arranged in $\Gamma_i$ and $\overline{\Gamma_i}$ with some $i$, respectively. It is easy to see that the schedule $\Gamma_1\cdot \Gamma_2\cdots \Gamma_{n-2}\cdot \Gamma_{n-1}\cdot \overline{\Gamma_{n-2}\cdot \Gamma_{n-1}\cdot \Gamma_1\cdot \Gamma_2\cdots \Gamma_{n-3}}$ satisfies the constraint.

Last, for the bounded-by-$k$ constraint, it follows directly from Lemma~\ref{feasibility1}.
\end{proof}

Next, we are ready to analyze the approximation quality of our algorithm in Section~\ref{mainalgorithm}.

\section{Analyzing the Approximation Quality}\label{analysis}
In this section, we introduce a novel framework that differs from the one presented in \cite{westphal2014}. Our refined analysis within this framework yields an improved 5-approximation ratio. Importantly, this ratio holds for a wide range of meaningful values for the parameters $k$ and $n$. Specifically, it applies to any $k$ and $n$ satisfying the conditions $n>k\geq 2$ and $n \geq 4$.

To facilitate the analysis, we use $\OPT$ to denote the total traveling distance of all teams in an optimal solution.
Next, we present several useful bounds.

\subsection{Some Bounds}\label{bounds}
First, we recall some well-known lower bounds for TTP-$k$.

\begin{lemma}[\cite{westphal2014}]\label{lb5}
$nd(T^*)\leq \OPT$.
\end{lemma}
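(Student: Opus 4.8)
The plan is to show that each individual team, on its own, must travel a distance of at least $d(T^*)$ in any feasible schedule, and then to sum this bound over all $n$ teams.

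First I would fix an arbitrary feasible double round-robin schedule and focus on a single team, say $t_i$. Over the $2(n-1)$ days, $t_i$ plays exactly one away game at each of the other $n-1$ venues, and by the direct-traveling assumption it begins at its home vertex $i$, travels directly from each day's venue to the next day's venue, and returns to $i$ after the last game. Hence the itinerary of $t_i$ is a closed walk in $G$ that starts and ends at vertex $i$ and visits every vertex of $\{1,\dots,n\}$ at least once: vertex $i$ as the start/end point, and every other vertex because $t_i$ has an away game there.

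Next I would use the triangle inequality to shortcut this closed walk into a Hamiltonian cycle. Ordering the vertices by the time of their first appearance along the walk and repeatedly replacing any sub-walk $a\to\cdots\to b$ that passes only through already-listed vertices by the single edge $ab$ does not increase the total length; this yields a Hamiltonian cycle $T_i$ of $G$ whose length $d(T_i)$ is at most the total distance traveled by $t_i$. Since $T^*$ is a minimum-length Hamiltonian cycle, the distance traveled by $t_i$ is at least $d(T_i)\ge d(T^*)$. Summing over $t_1,\dots,t_n$, the total traveling distance in the schedule is at least $n\,d(T^*)$; applying this to an optimal schedule gives $n\,d(T^*)\le\OPT$.

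I do not expect a genuine obstacle here: the argument is essentially the classic metric-TSP lower bound applied per team. The only step that warrants a little care is the shortcutting claim, i.e., verifying that under the triangle inequality a closed walk visiting all $n$ vertices can always be reduced to a Hamiltonian cycle of no greater length; this is standard, but I would state it explicitly so that the per-team bound $d(T_i)\ge d(T^*)$ is fully justified.
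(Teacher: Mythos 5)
Your argument is correct and is the standard proof of this bound: the paper itself does not prove Lemma~\ref{lb5} but cites it from Westphal and Noparlik, and the argument behind that citation is exactly the one you give. Each team's itinerary under the direct-traveling assumption is a closed walk through all $n$ vertices, which shortcuts (by the triangle inequality) to a Hamiltonian cycle of length at least $d(T^*)$, and summing over the $n$ teams yields $nd(T^*)\leq \OPT$.
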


\begin{lemma}[\cite{westphal2014}]\label{lb6}
$\Delta\leq (k/2)\cdot \OPT$.
\end{lemma}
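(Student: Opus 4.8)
The plan is to prove a per-team ``independent'' lower bound and then sum over all teams. Fix any feasible schedule $S$ and any team $t_i$. Over the whole double round-robin, $t_i$ plays exactly one away game at each of the other $n-1$ venues, and by the direct-traveling assumption its movements decompose into maximal \emph{road trips}: maximal runs of consecutive away games, during each of which $t_i$ leaves its home venue, visits the corresponding venues in order, and returns to its home venue (immediately before and after such a run the team is at its own venue, either playing home games or at the start/end of the tournament). The bounded-by-$k$ constraint guarantees that every road trip visits at most $k$ venues.

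Next I would lower-bound the travel cost of a single road trip. Suppose a road trip of $t_i$ visits venues $j_1, j_2, \dots, j_p$ in this order, with $p \le k$; its cost is $C = d(i,j_1) + \sum_{a=1}^{p-1} d(j_a,j_{a+1}) + d(j_p,i)$. For each index $a$, applying the triangle inequality along the prefix path $i \to j_1 \to \cdots \to j_a$ gives $d(i,j_1) + \cdots + d(j_{a-1},j_a) \ge d(i,j_a)$, and along the suffix path $j_a \to j_{a+1} \to \cdots \to j_p \to i$ gives $d(j_a,j_{a+1}) + \cdots + d(j_p,i) \ge d(j_a,i)$. Since these two sub-paths together use each edge of the trip exactly once, adding the two inequalities yields $C \ge 2\, d(i,j_a)$. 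Summing this over $a = 1,\dots,p$ gives $pC \ge 2\sum_{a=1}^p d(i,j_a)$, hence $C \ge \tfrac{2}{k}\sum_{a=1}^p d(i,j_a)$ because $p \le k$.

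Summing this inequality over all road trips of $t_i$, and using that each venue $j \ne i$ is visited in exactly one of them, the total distance traveled by $t_i$ in $S$ is at least $\tfrac{2}{k}\sum_{j\ne i} d(i,j) = \tfrac{2}{k}\, s(i)$. Finally, summing over all teams, the total traveling distance in $S$ is at least $\tfrac{2}{k}\sum_i s(i) = \tfrac{2}{k}\,\Delta$. Taking $S$ to be an optimal schedule gives $\OPT \ge \tfrac{2}{k}\,\Delta$, i.e., $\Delta \le \tfrac{k}{2}\,\OPT$, as claimed.

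There is no real obstacle here; the two points that need a little care are the bookkeeping that the prefix and suffix sub-paths of a road trip jointly account for every edge of the trip exactly once (so their lengths sum to exactly $C$), and the observation that the road trips occurring at the very start or end of the tournament still begin and end at the team's home venue, so the same cost formula $C$ applies to them. The bounded-by-$k$ hypothesis is used exactly once, to ensure $p \le k$.
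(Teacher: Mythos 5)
Your proof is correct. The paper itself gives no proof of this lemma---it is imported directly from Westphal and Noparlik's work as a known lower bound---and your argument (decomposing each team's travel into road trips of at most $k$ away venues, showing each trip of cost $C$ satisfies $C \geq 2d(i,j_a)$ for every venue $j_a$ it visits, and averaging to get $C \geq \tfrac{2}{k}\sum_a d(i,j_a)$) is precisely the standard derivation of this ``independence'' lower bound, with all the necessary care about the prefix/suffix edge accounting and the trips at the tournament's boundary.
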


Note that Imahori \emph{et al.} \cite{imahori2010approximation} proved that $\Delta\leq (n^2/4)d(T^*)$. By Lemma~\ref{lb5}, we have the following lemma.

\begin{lemma}[\cite{imahori2010approximation}]\label{lb7}
$\Delta\leq (n/4)\cdot \OPT$.
\end{lemma}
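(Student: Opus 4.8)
The final statement to prove is Lemma~\ref{lb7}: $\Delta \leq (n/4)\cdot \OPT$, which the excerpt says follows from combining Imahori \emph{et al.}'s bound $\Delta \leq (n^2/4)\,d(T^*)$ with Lemma~\ref{lb5}.

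\medskip

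The plan is to chain the two stated inequalities. First I would invoke the result of Imahori \emph{et al.}~\cite{imahori2010approximation}, namely $\Delta \leq (n^2/4)\,d(T^*)$, treating it as a black box since the excerpt explicitly permits assuming earlier-stated results. Next I would apply Lemma~\ref{lb5}, which gives $n\,d(T^*) \leq \OPT$, equivalently $d(T^*) \leq \OPT/n$. Substituting this into the first bound yields
\[
\Delta \;\leq\; \frac{n^2}{4}\,d(T^*) \;\leq\; \frac{n^2}{4}\cdot\frac{\OPT}{n} \;=\; \frac{n}{4}\cdot \OPT,
\]
which is exactly the claim.

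\medskip

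There is essentially no obstacle here: the lemma is a one-line corollary obtained by substituting one inequality into another. The only thing to be careful about is making sure the direction of each inequality is used correctly (both $\Delta$ and $\OPT$ are nonnegative, $d(T^*)\ge 0$, so scaling by $n^2/4 > 0$ preserves the inequality), and noting that $T^*$ denotes the minimum-length Hamiltonian cycle, so both referenced lemmas are about the same object $T^*$. Should one wish to be self-contained rather than cite Imahori \emph{et al.}, one could reprove $\Delta \leq (n^2/4)\,d(T^*)$ by the standard argument that each $s(i) = \sum_{j\neq i} d(i,j)$ can be bounded using the triangle inequality against the optimal tour $T^*$ (each $d(i,j)$ is at most the length of either of the two $i$–$j$ paths along $T^*$, and averaging appropriately gives a factor-$n$ blow-up over $d(T^*)$, summed over $n$ vertices gives $n^2$), but citing the prior work is cleanest and is what the excerpt indicates.
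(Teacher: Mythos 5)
Your proposal is correct and matches the paper exactly: the paper derives Lemma~\ref{lb7} by combining Imahori \emph{et al.}'s bound $\Delta\leq (n^2/4)d(T^*)$ with $nd(T^*)\leq \OPT$ from Lemma~\ref{lb5}, which is precisely your one-line chaining argument.
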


Then, to analyze the weight of our schedule, we define some notations and explore their properties.
Recall that the construction contains $2(n-1)$ days, and each day contains $m=n/2$ games. Let $d_i(j)$ (resp., $e_i(j)$) denote the distance (resp., edge) between two vertices representing the two teams in the $i$-th game (from left to right) on the $j$-th day, where $1\leq i\leq m$ and $1\leq j\leq 2(n-1)$. For example, in Figure~\ref{fig01}, we have $d_1(1)=d(20,10)$ and $d_2(1)=d(1,19)$.

\begin{lemma}\label{b1}
$\sum_{i=1}^{m}\sum_{j=1}^{2(n-1)}d_i(j)=\Delta$.
\end{lemma}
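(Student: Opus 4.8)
The plan is to show that summing $d_i(j)$ over all games on all $2(n-1)$ days counts each edge of $G$ exactly $n$ times, which gives $n\cdot d(T^\ast)$... no wait, that is not the claim; the claim is that the total equals $\Delta=\sum_i s(i)=\sum_{i\ne j}d(i,j)$, i.e., each edge $ij$ of the complete graph is counted exactly twice (once as $s(i)$ contribution, once as $s(j)$ contribution). So the key observation is that the full schedule realizes a complete double round-robin: every unordered pair $\{t_i,t_j\}$ appears in exactly two games (one in some $\Gamma_p$, one in $\overline{\Gamma_p}$). Hence if we ignore the directions (home/away venue) and just look at which two teams meet in each game, the multiset of $m\cdot 2(n-1)=n(n-1)$ games is exactly the multiset in which each of the $\binom{n}{2}$ edges of $G$ appears twice.

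First I would invoke Theorem~\ref{feasibility}, which establishes that the constructed schedule is a complete double round-robin tournament. This means that for each unordered pair of distinct teams $t_i,t_j$, there are exactly two games between them in the whole schedule (they differ only in which team is the home team). Since $d_i(j)$ is defined as the distance between the two teams playing the $i$-th game on day $j$ — a quantity that depends only on the unordered pair and not on the home/away orientation — the double sum $\sum_{i=1}^{m}\sum_{j=1}^{2(n-1)}d_i(j)$ equals $\sum_{\{t_i,t_j\}} 2\,d(i,j)$, where the sum ranges over all unordered pairs.

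Then I would simply rewrite $\sum_{\{t_i,t_j\}} 2\,d(i,j) = \sum_{i\ne j} d(i,j) = \sum_{i}\sum_{j\ne i} d(i,j) = \sum_i s(i) = \Delta$, using the definitions $s(i)=\sum_{j\ne i}d(i,j)$ and $\Delta=\sum_i s(i)$ given in Section~\ref{sec_2}. This completes the proof.

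The only thing requiring any care is the first bookkeeping step: one must be clear that "the $i$-th game on the $j$-th day" ranges, as $(i,j)$ runs over $\{1,\dots,m\}\times\{1,\dots,2(n-1)\}$, over all $n(n-1)$ games of the double round-robin without repetition, and that each such game contributes the distance of the edge joining its two participating teams. I expect no real obstacle here — the content is entirely carried by the already-proved feasibility (double round-robin) property; the remainder is a change of order of summation. If desired, one can phrase it via the decomposition $G=\Gamma_1\cup\cdots\cup\Gamma_{n-1}$ into perfect matchings already noted in the proof of Theorem~\ref{feasibility}: $\sum_{i=1}^m d_i(j)$ over a single day $j$ in the first season is the total length of the matching $\Gamma_j$, and since the first season's matchings partition $E(G)$ we get $\sum_{j=1}^{n-1}\sum_{i=1}^m d_i(j)=d(G):=\sum_{i<j}d(i,j)$; the second season contributes the same $d(G)$ again, so the grand total is $2\,d(G)=\Delta$.
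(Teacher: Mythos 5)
Your proof is correct and follows essentially the same route as the paper's: both argue that the double round-robin property means each unordered pair of teams meets exactly twice, so the double sum is twice the total edge length of $G$, which is $\Delta$. The extra detail you give (and the alternative phrasing via the matching decomposition) is fine but not needed beyond what the paper's one-line proof already contains.
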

\begin{proof}
Since the schedule is a double round-robin, it contains exactly two games between any pair of two teams. Recall that $\Delta$ is twice the total distance of all edges in $G$. So, the lemma holds.
\end{proof}

\begin{lemma}\label{b2}
$\sum_{j=1}^{2(n-1)}d_1(j)=2s(n)$.
\end{lemma}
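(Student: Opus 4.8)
The plan is to track exactly which games appear as the leftmost game $e_1(j)$ across all $2(n-1)$ days and observe that these are precisely the games incident on team $t_n$. In the rotation scheme, team $t_n$ sits at the fixed double-cycle node, and its opponent on day $j$ is the single-cycle node that rotates into the leftmost position. So $e_1(j)$ is always an edge of the form $n\,x$ for some $x \neq n$, and as $j$ ranges over $\{1,\dots,n-1\}$ (the first season), each of the $n-1$ vertices $x \neq n$ plays team $t_n$ exactly once; hence $\sum_{j=1}^{n-1} d_1(j) = \sum_{x \neq n} d(n,x) = s(n)$.

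Next I would argue the same for the second season. The second season consists of the days $\overline{\Gamma_{n-2}}, \overline{\Gamma_{n-1}}, \overline{\Gamma_1},\dots,\overline{\Gamma_{n-3}}$, which is just a reordering of $\overline{\Gamma_1},\dots,\overline{\Gamma_{n-1}}$; reversing home venues does not change which pair of teams meets, and it does not change which game is leftmost (the block structure and the position of the $t_n$-node are identical). So the leftmost game on the second-season day corresponding to $\overline{\Gamma_j}$ is again the edge $n\,x$ where $x$ is $t_n$'s opponent in $\Gamma_j$. Therefore $\sum_{j=n}^{2(n-1)} d_1(j) = \sum_{x \neq n} d(n,x) = s(n)$ as well. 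Adding the two seasons gives $\sum_{j=1}^{2(n-1)} d_1(j) = 2 s(n)$.

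The only real thing to be careful about is the claim that the leftmost game is always incident on $t_n$ — i.e., that the double-cycle node genuinely stays in game slot $1$ on every day, including after the special-case edge reversals in Rules 1 and 2. Those reversals only flip the \emph{direction} of edges (which team is home), not the \emph{pairing} of teams into games, so they are irrelevant here: $d_1(j)$ depends only on the unoriented edge $e_1(j)$. I would state this explicitly. One could also give a cleaner one-line version: the multiset $\{e_1(j) : 1 \le j \le 2(n-1)\}$ equals the multiset of edges incident on $n$, each appearing exactly twice (once per season), by the standard property of the circle method that the fixed vertex is matched to every other vertex exactly once per round-robin. That immediately yields $\sum_j d_1(j) = 2\sum_{x\neq n} d(n,x) = 2 s(n)$.

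I expect no genuine obstacle; the statement is essentially a bookkeeping observation about the rotation scheme. If anything, the subtle point is making sure the reader is convinced that the pairing structure of the first season is literally reproduced (up to home/away swap and day permutation) in the second season, so that the leftmost-slot edges are the same multiset both times — this follows directly from the construction's definition of the second season as $\overline{\Gamma_{n-2}\cdot\Gamma_{n-1}\cdot\Gamma_1\cdots\Gamma_{n-3}}$.
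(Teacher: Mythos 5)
Your proposal is correct and takes essentially the same approach as the paper, whose one-line proof is exactly your "one-line version": the leftmost game always involves team $t_n$, which meets every other team twice in the double round-robin, giving $2s(n)$. Your additional care about the direction reversals and the second-season reordering is sound but not needed beyond what the paper states.
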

\begin{proof}
It follows directly from that the most left game always involves team $t_n$, and $t_n$ meets every other team twice in the double round-robin schedule.
\end{proof}

\begin{lemma}\label{b3}
For any $1\leq i\leq m$ and $1\leq j\leq 2(n-1)$, we have $ \EE{d_i(j)}=\frac{1}{2(n-1)}\sum_{j=1}^{2(n-1)}d_i(j)$.
\end{lemma}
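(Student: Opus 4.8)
The plan is to exploit the rotational structure of the construction. The random labeling $\sigma$ takes the $n-1$ equiprobable values obtained by cyclically relabeling the vertices of $T'$, and the schedule is produced by a rotation scheme, so advancing the labeling by one step has the same effect as advancing the day by one within a season. To make this precise, fix the enumeration $u_0,\dots,u_{n-2}$ of $V(T')$ given by its orientation, so that the $n-1$ equiprobable labelings are $\sigma_c$, $c\in\mathbb{Z}/(n-1)$, with $\sigma_c^{-1}(a)=u_{a-1+c}$ for $a\neq n$ and $\sigma_c^{-1}(n)=n$. Fix $i$, and for each day $j$ let $P_i(j)$ be the unordered pair of labels of the two teams playing the $i$-th game on day $j$; thus $P_1(j)=\{n,a_1^{(j)}\}$ and $P_i(j)=\{a_i^{(j)},b_i^{(j)}\}$ for $i\ge 2$, where the labels $a_i^{(j)},b_i^{(j)}$ are determined by the construction and do not depend on $\sigma$. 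I will use two facts: (i) within each season the pair advances each day by a fixed translation $\rho\colon a\mapsto a+c_0$ of $\mathbb{Z}/(n-1)$ with $\gcd(c_0,n-1)=1$ (so $\rho$ is an $(n-1)$-cycle; in Figures~\ref{fig01}--\ref{fig02} one has $c_0=n/2$), hence over one season $P_i(j)$ runs through the entire $\langle\rho\rangle$-orbit of $P_i(1)$, each pair exactly once; and (ii) the second season consists of $\overline{\Gamma_1},\dots,\overline{\Gamma_{n-1}}$ in some order, and since reversing home venues does not change which two teams meet, $P_i$ restricted to the second season realizes that same orbit again. The directions prescribed by Rules~1--2 play no role here, since $d_i(j)$ depends only on $P_i(j)$.

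\textbf{Reading off both sides.} By (i), for $i\ge 2$ the difference $\ell_i:=b_i^{(j)}-a_i^{(j)}\bmod(n-1)$ is independent of $j$, and the season-orbit of $P_i$ is precisely $\{\{a,a+\ell_i\}:a\in\mathbb{Z}/(n-1)\}$. For a fixed day $j$, as $c$ runs over $\mathbb{Z}/(n-1)$ the edge used by the $i$-th game under $\sigma_c$ equals $\{u_{a_i^{(j)}-1+c},\,u_{b_i^{(j)}-1+c}\}$, which runs over $\{\{u_x,u_{x+\ell_i}\}:x\in\mathbb{Z}/(n-1)\}$ exactly once; hence
\[
\EE{d_i(j)}=\frac{1}{n-1}\sum_{x\in\mathbb{Z}/(n-1)}d(u_x,u_{x+\ell_i}),
\]
which in particular is the same for every $j$. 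On the other hand, for any fixed $\sigma_c$, applying (i) to the first season gives $\sum_{j\text{ in season }1}d_i(j)=\sum_{x}d(u_x,u_{x+\ell_i})$, and by (ii) the second season contributes the same total, so $\sum_{j=1}^{2(n-1)}d_i(j)=2\sum_{x}d(u_x,u_{x+\ell_i})$ for every choice of $\sigma$. Comparing the two expressions yields $\EE{d_i(j)}=\tfrac{1}{2(n-1)}\sum_{j=1}^{2(n-1)}d_i(j)$, as claimed; this also shows the right-hand side is well defined (independent of $\sigma$). The case $i=1$ is identical with $\{u_x,u_{x+\ell_i}\}$ replaced by $\{n,u_x\}$, giving $\EE{d_1(j)}=s(n)/(n-1)$ and $\sum_{j=1}^{2(n-1)}d_1(j)=2s(n)$, in agreement with Lemma~\ref{b2}.

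\textbf{Main obstacle.} The one nontrivial ingredient is fact (i): that the rotation scheme acts on the labels as a genuine translation of $\mathbb{Z}/(n-1)$ which is an $(n-1)$-cycle, and correspondingly that the family $\{\sigma_c\}$ is closed under precomposition with that translation. This is the standard circle/polygon method for round-robin scheduling and can be verified directly from the construction (from Figures~\ref{fig01}--\ref{fig02} one reads off $c_0=n/2$, and $\gcd(n/2,n-1)=1$ since $n-1$ is coprime to $n$); once it is in hand, everything else is routine bookkeeping with cyclic shifts. A secondary point requiring care is (ii): that the second season merely reorders the matchings $\overline{\Gamma_j}$ and that reversing home venues leaves the pair $P_i(j)$ of each game unchanged.
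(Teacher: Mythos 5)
Your proof is correct and follows essentially the same route as the paper's: both arguments reduce to the observation that the rotation scheme confines $e_i(j)$ to the class of $n-1$ edges whose endpoints' labels differ by a fixed amount on $T'$, that the uniform cyclic relabeling makes $e_i(j)$ uniform over that class, and that the $2(n-1)$ days visit each edge of the class exactly twice. You merely spell out the translation-by-$c_0$ structure and the coprimality $\gcd(n/2,n-1)=1$ that the paper leaves implicit under the phrase ``rotation scheme.''
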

\begin{proof}
The construction is based on the rotation scheme, which makes sure that in the construction the positions of the $n-1$ teams $\{t_1,t_2,\dots,t_{n-1}\}$ always correspond to the same cycle $T'$ on different days. Recall that we label $T'$ in the order of $1,2,\dots,n-1$ uniformly at random from the $n-1$ cases and $d_i(j)$ is the distance of edge $e_i(j)$.

\textbf{Case~1: $i=1$.} It is easy to see that $e_i(j)$ corresponds to an edge incident to $n$. There are $n-1$ edges with a total weight of $s(n)$. The probability of each of them being $e_i(j)$ is $\frac{1}{n-1}$. Hence, by Lemma~\ref{b2}, we have $\EE{d_i(j)}=\frac{1}{n-1}\cdot s(n)=\frac{1}{2(n-1)}\sum_{j=1}^{2(n-1)}d_i(j)$.

\textbf{Case~2: $i>1$.} By the construction, we know that $e_i(j)$ corresponds to an edge with the labels of its two vertices differing by $i-1$ on $T'$. Since $T'$ contains $n-1$ (odd) vertices, there are $n-1$ such edges with a total weight of $\frac{1}{2}\sum_{j=1}^{2(n-1)}d_i(j)$. The probability of each of them being $e_i(j)$ is also $\frac{1}{n-1}$. Hence, we have $\EE{d_i(j)}=\frac{1}{2(n-1)}\sum_{j=1}^{2(n-1)}d_i(j)$.

So, the lemma holds.
\end{proof}

Note that the part $\sum_{j=1}^{2(n-1)}d_i(j)$ in Lemma~\ref{b3} is a constant for any $1\leq i\leq m$ (here the constant means a value that stays unchanged for the $n-1$ cases of labeling $1,2,\dots,n-1$). For $i=1$, we have $\sum_{j=1}^{2(n-1)}d_i(j)=2s(n)$ by Lemma~\ref{b2}, which is a constant obviously. For $i>1$, by the labels of teams in the schedule, $\sum_{j=1}^{2(n-1)}d_i(j)$ is twice the total distance of all edges where the labels of vertices of each edge differ by $i-1$ on the cycle $T'$, which is also a constant.

\begin{lemma}\label{lb1}
For any $1\leq j\leq 2(n-1)$, we have $\EE{\sum_{i=1}^{m}d_i(j)}=\frac{\Delta}{2(n-1)}$.
\end{lemma}
\begin{proof}
Note that $\EEE{\sum_{i=1}^{m}d_i(j)}=\sum_{i=1}^{m}\EE{d_i(j)}$. By Lemmas~\ref{b1} and \ref{b3}, we have
\[
\sum_{i=1}^{m}\EE{d_i(j)}=\frac{1}{2(n-1)}\sum_{i=1}^{m}\sum_{j=1}^{2(n-1)}d_i(j)=\frac{\Delta}{2(n-1)}.
\]
\end{proof}

\begin{lemma}\label{lb2}
For any $1\leq i\leq m$ and $1\leq j\leq 2(n-1)$, we have $\EE{d_i(j)}\leq 2\cdot\EE{d_1(j)}$.
\end{lemma}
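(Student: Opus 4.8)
The plan is to compare $\EE{d_i(j)}$ with $\EE{d_1(j)}$ by reducing both sides, via Lemmas~\ref{b2} and \ref{b3}, to statements about distances of chords of the cycle $T'$ together with the edges incident to the vertex $n$. Concretely, by Lemma~\ref{b3} we have $\EE{d_i(j)}=\frac{1}{2(n-1)}\sum_{j=1}^{2(n-1)}d_i(j)$, which for $i>1$ equals $\frac{1}{n-1}$ times the total length of the $n-1$ chords of $T'$ that join vertices at cyclic distance $i-1$, and for $i=1$ equals $\frac{1}{n-1}s(n)$ by Lemma~\ref{b2}. So the claim $\EE{d_i(j)}\le 2\EE{d_1(j)}$ is equivalent to: the sum of the lengths of any one ``chord class'' of $T'$ is at most $2s(n)$.

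The key step is the triangle inequality: each chord $e=uv$ of $T'$ at cyclic distance $i-1$ satisfies $d(u,v)\le d(u,n)+d(n,v)$. Summing this over all $n-1$ chords in a fixed chord class, I would like to conclude the class has total length at most $\sum_{uv}\big(d(u,n)+d(n,v)\big)$. The point is that, because $T'$ has an odd number $n-1$ of vertices, the chords at a fixed cyclic distance $i-1$ form a single cycle through all $n-1$ vertices of $T'$; hence each vertex of $T'$ appears in exactly two such chords, so $\sum_{uv \text{ in the class}}\big(d(u,n)+d(n,v)\big)=2\sum_{v\in V(T')}d(v,n)=2s(n)$. Therefore the chord class has total length at most $2s(n)$, and dividing by $n-1$ gives $\EE{d_i(j)}\le \frac{2s(n)}{n-1}=2\EE{d_1(j)}$ for $i>1$. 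The case $i=1$ is trivial since then $\EE{d_i(j)}=\EE{d_1(j)}\le 2\EE{d_1(j)}$.

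The main obstacle is the counting/structural fact that the chords of $T'$ at a fixed cyclic distance $i-1$ cover each vertex exactly twice; this is where the oddness of $n-1$ is essential (if $n-1$ were even and $i-1=(n-1)/2$, the ``chords'' would be a perfect matching, covering each vertex once, and the bound would improve rather than break, but the structure is different). I would verify this by noting that adding $i-1 \bmod (n-1)$ repeatedly to a starting label cycles through all residues modulo $n-1$ precisely because $\gcd(i-1,n-1)$ need not be $1$ in general — so more carefully, the set of such chords decomposes $V(T')$ into $\gcd(i-1,n-1)$ vertex-disjoint cycles, and in each of them every vertex still has degree exactly $2$, so each vertex of $T'$ lies in exactly two chords of the class regardless. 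This degree-$2$ property is all that is needed, and it is what makes the sum of $d(u,n)+d(n,v)$ telescope to $2s(n)$.

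Once this structural observation is in hand, the proof is a one-line application of the triangle inequality followed by the counting identity and a division by $2(n-1)$; no estimates beyond the triangle inequality are required, and the randomness over $\sigma$ has already been absorbed into Lemma~\ref{b3}, so the argument is purely deterministic at this stage.
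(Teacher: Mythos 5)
Your proposal is correct and follows essentially the same route as the paper's proof: both reduce the claim via Lemma~\ref{b3} to bounding $\sum_{j}d_i(j)$ by $4s(n)$, apply the triangle inequality through the vertex $n$, and then use the fact that each of the $n-1$ other vertices occurs in exactly two edges of the relevant class so that the bound telescopes to $2s(n)$ per season. Your chord-class/degree-two formulation (with the $\gcd$ caveat) is just a different bookkeeping of the paper's observation that each team passes through the upper and lower positions twice over the $2(n-1)$ days.
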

\begin{proof}
The case $i=1$ obviously holds. We consider $i>1$. By Lemma~\ref{b3}, it is sufficient to prove
\[
\sum_{j=1}^{2(n-1)}d_i(j)\leq 2\sum_{j=1}^{2(n-1)}d_1(j).
\]

In the construction, there are two teams in the $i$-th game, from left to right, on the $j$-th day, including one upper team and one lower team for each $i>1$. Let $a_i(j)$ and $b_i(j)$ be the vertices representing the upper and lower teams, respectively. For example, in Figure~\ref{fig01}, we have $a_2(1)=1$ and $b_2(1)=19$. Then, we have $d_i(j)=d(a_i(j), b_i(j))$. By the triangle inequality, we have $d(a_i(j), b_i(j))\leq d(a_i(j), n)+d(n, b_i(j))$. Hence,
\[
\sum_{j=1}^{2(n-1)}d_i(j)=\sum_{j=1}^{2(n-1)}d(a_i(j), b_i(j))\leq \sum_{j=1}^{2(n-1)}(d(a_i(j), n)+d(n, b_i(j))).
\]

By the rotation scheme of the construction, it is easy to see that each team of $\{t_1,t_2,\dots,t_{n-1}\}$ goes through twice on the upper and lower positions, respectively. Hence, by Lemma~\ref{b2}, we have
\[
\sum_{j=1}^{2(n-1)}(d(a_i(j), n)+d(n, b_i(j)))=\sum_{i=1}^{n-1}(2d(i,n)+2d(n,i))=4s(n)=2\sum_{j=1}^{2(n-1)}d_1(j).
\]
\end{proof}

\begin{lemma}\label{lb3}
For any $1\leq j\leq 2(n-1)$, we have $\EE{d_1(j)}\leq\frac{\Delta}{n(n-1)}$.
\end{lemma}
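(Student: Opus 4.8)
The plan is to reduce the claimed inequality to a simple averaging observation about the vertex that Step~1 of the algorithm relabels as $n$. First I would specialize Lemma~\ref{b3} to the case $i=1$, which gives
\[
\EE{d_1(j)}=\frac{1}{2(n-1)}\sum_{j'=1}^{2(n-1)}d_1(j'),
\]
a quantity that no longer depends on $j$. Then Lemma~\ref{b2} identifies the sum on the right-hand side as $2s(n)$, so that $\EE{d_1(j)}=\frac{s(n)}{n-1}$. Thus the lemma is equivalent to the bound $s(n)\le \Delta/n$.

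To obtain that bound I would invoke Step~1 of the randomized algorithm directly: the vertex relabeled as $n$ is, by construction, a vertex whose total incident distance is minimum, i.e. $s(n)=\min_{1\le i\le n}s(i)$. Since $\Delta=\sum_{i=1}^{n}s(i)$ is a sum of $n$ terms, each of which is at least the minimum $s(n)$, averaging yields $s(n)\le \Delta/n$. Substituting this into $\EE{d_1(j)}=\frac{s(n)}{n-1}$ gives $\EE{d_1(j)}\le \frac{\Delta}{n(n-1)}$, which is exactly the claim.

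I do not expect a genuine obstacle here: the combinatorial content (the rotation-scheme structure, and the fact that $t_n$ meets every other team twice) has already been absorbed into Lemmas~\ref{b2} and~\ref{b3}, so the proof is just a two-line chain of substitutions. The only point worth stating carefully is that the inequality hinges on the \emph{deterministic} choice of the vertex $n$ in Step~1, not on the randomization over $\sigma$; the expectation over $\sigma$ only enters through Lemma~\ref{b3}, whose right-hand side is already a constant independent of the labeling of $T'$.
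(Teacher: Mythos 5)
Your proposal is correct and follows the same route as the paper: specialize Lemma~\ref{b3} to $i=1$, identify the sum as $2s(n)$ via Lemma~\ref{b2}, and bound $s(n)\leq\Delta/n$ using the minimality of the vertex labeled $n$ in Step~1. No gaps; your remark that the key inequality rests on the deterministic choice in Step~1 rather than the randomization is a correct and worthwhile clarification.
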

\begin{proof}
By Lemmas~\ref{b2} and~\ref{b3}, we have
\[
\EE{d_1(j)}=\frac{1}{2(n-1)}\sum_{j=1}^{2(n-1)}d_1(j)=\frac{1}{2(n-1)}\cdot 2s(n)\leq\frac{\Delta}{n(n-1)},
\]
where the inequality follows from $s(n)\leq(1/n)\Delta$ because the vertex $n$ is selected with the total distance of edges incident on it minimized in the algorithm.
\end{proof}

\begin{lemma}\label{lb4}
For any $1\leq j\leq 2(n-1)$, we have $\EE{d_2(j)}=\frac{d(T')}{n-1}$.
\end{lemma}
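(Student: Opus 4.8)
The plan is to reduce immediately to Lemma~\ref{b3} and then identify exactly which edges of $G$ the second game $e_2(j)$ ranges over as $j$ runs over all $2(n-1)$ days. Applying Lemma~\ref{b3} with $i=2$ gives $\EE{d_2(j)}=\frac{1}{2(n-1)}\sum_{j=1}^{2(n-1)}d_2(j)$, so it suffices to establish the identity $\sum_{j=1}^{2(n-1)}d_2(j)=2\,d(T')$.

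To evaluate this sum I would invoke Case~2 of the proof of Lemma~\ref{b3} specialized to $i=2$: on each day $j$, the edge $e_2(j)$ joins two vertices whose labels have cyclic difference $1$ along the cycle $T'$, i.e.\ $e_2(j)$ is one of the $n-1$ edges of the cycle $T'$ itself, including the wrap-around edge joining the vertices labeled $n-1$ and $1$ (for instance the edge $\{1,19\}$ at position~$2$ in Figure~\ref{fig01}). Conversely, the rotation scheme slides this ``cyclic-distance-$1$ pair'' through all $n-1$ positions of $T'$ within a single season, so each edge of $T'$ appears exactly once among $e_2(1),\dots,e_2(n-1)$; and since the full schedule is two passes through the $n-1$ daily matchings $\Gamma_1,\dots,\Gamma_{n-1}$ (directly in the first season, reversed in the second), as already noted in the proof of Theorem~\ref{feasibility}, each edge of $T'$ occurs exactly twice among $e_2(1),\dots,e_2(2(n-1))$. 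Hence $\sum_{j=1}^{2(n-1)}d_2(j)=2\sum_{e\in E(T')}d(e)=2\,d(T')$, and substituting back yields $\EE{d_2(j)}=\frac{d(T')}{n-1}$.

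The only step needing genuine care — the closest thing to an obstacle — is the claim that the position-$2$ edges over one season are precisely the $n-1$ cycle edges of $T'$, each exactly once. This is essentially the statement that ``cyclic label-difference $1$ on $T'$'' describes the edge set of $T'$, combined with the standard uniformity property of the rotation scheme used in Lemma~\ref{b3} and the remark following it; everything else is the bookkeeping that the double round-robin traverses each daily matching twice, which the excerpt has already provided.
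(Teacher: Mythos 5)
Your proof is correct and follows essentially the same route as the paper: apply Lemma~\ref{b3} with $i=2$ and then observe that the position-$2$ edges are exactly the edges of $T'$ (cyclic label-difference $1$), each occurring twice over the $2(n-1)$ days, so that $\sum_{j=1}^{2(n-1)}d_2(j)=2d(T')$. The paper obtains this identity directly from the remark following Lemma~\ref{b3}; you merely spell out the counting in slightly more detail.
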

\begin{proof}
Recall that $\sum_{j=1}^{2(n-1)}d_i(j)$ is twice the total distance of all edges where the labels of vertices of each edge differ by $i-1$ on the cycle $T'$. So, we have $\sum_{j=1}^{2(n-1)}d_2(j)=2d(T')$.
By Lemma~\ref{b3}, we have
\[
\EE{d_2(j)}=\frac{1}{2(n-1)}\sum_{j=1}^{2(n-1)}d_2(j)=\frac{1}{2(n-1)}\cdot 2d(T')=\frac{d(T')}{n-1}.
\]
\end{proof}

\subsection{The Analysis}\label{ouranalysis}
Recall that all teams are at home on the $0$-th and $(2n-1)$-th days. To analyze the expected cost of the schedule, we calculate the total expected cost of all \emph{moves} of all teams in the schedule, where all teams make one move from the game venue on the $i$-th day to the game venue on the $(i+1)$-th day ($0\leq i\leq 2n-2$). Note that if a team plays two home games on two consecutive days, then the cost of the move is 0.
If a team plays two away games on two consecutive days, we call the move an \emph{away-move}, which corresponds to a move between two opponents' venues. If a team plays only one away game on two consecutive days, we call the move a \emph{home-move}, which corresponds to a move from/to its home venue. Then, we only need to consider all teams' away-moves and home-moves.

For the sake of analysis, we make two assumptions that will not decrease the cost of the schedule by the triangle inequality.

\begin{assumption}\label{ass1}
We assume that there is a home day between the last day of the first season and the first day of the second season, where all teams are at home.
\end{assumption}

After Assumption~\ref{ass1}, we can analyze the total expected cost of all moves from the $0$-th day to the home day in the first season and then from the home day to the $(2n-1)$-th day in the second season, separately.

\begin{assumption}\label{ass2}
Supposing team $t_i$ plays with $t_n$ on the $j$-th day, we assume that both $t_i$ and $t_n$ return home after the games on the $(j-1)$-th and the $j$-th days, respectively.
\end{assumption}

Both Assumptions \ref{ass1} and \ref{ass2} will not decrease the cost of the schedule by the triangle inequality. Typically, a team only makes one move on two consecutive days. However, after Assumption~\ref{ass2}, a team may make an extra move. For example, $t_i$ plays an away game with $t_j$ on the $(i'-1)$-th day, and then plays an away game with $t_n$ on the $i'$-th day, where $t_i$ should make one move from the venues of $t_j$ to $t_n$. After the assumption, $t_i$ will make two moves: one is from the venues of $t_j$ to $t_i$ on the $(i'-1)$-th day, and the other is from the venues of $t_i$ to $t_n$ on the $i'$-th day. Both moves of $t_i$ involve its home venue, and hence are home-moves. Moreover, after Assumption~\ref{ass2}, a team makes an away-move only if it plays two consecutive away games in the same block within one season.

Since $\EE{d_i(j)}$, the expected distance of edge $e_i(j)$, is always the same for any $1\leq j\leq 2(n-1)$, we simply let $d_i\coloneqq\EE{d_i(j)}$ and denote the corresponding edge by $e_i$.

Next, we analyze the total expected cost of all moves in the first season.

\textbf{From the $0$-th day to the $1$-st day.} Since all teams are at home on the $0$-th day and there are $m$ games on the $1$-st day, we know that there are $m$ home-moves corresponding to the $m$ edges $e_1,\dots,e_m$, respectively. The total expected cost of all moves is
\begin{align}\label{cost1}
\sum_{i=1}^{m}d_i=\frac{\Delta}{2(n-1)},
\end{align}
where the equation follows from Lemma~\ref{lb1}.

\textbf{From the $i$-th day to the $(i+1)$-th day ($1\leq i\leq n-2$).} There are two cases: $k<n/2$ and $k\geq n/2$.

\textbf{Case~1: $k<n/2$.} Recall that the algorithm selects the best $l$ from $\{1,2,\dots,k\}$, and the construction contains $b=\ceil{\frac{m-l-1}{k}}+2$ blocks, where the widths are $w_1=1$, $w_2=m-1-(b-3)k-l$, $w_i=k$ ($i=3,\dots, b-1$), and $w_b=l$.
So, these $b$ blocks can be presented by
\[
(e_1), (e_{2},\dots, e_{1+w_2}), (e_{2+w_2},\dots, e_{1+w_2+w_3}), \dots, (e_{m-1+l},\dots, e_{m}).
\]

We first consider away-moves. After Assumption~\ref{ass2}, an away-move happens only in the same block. By the construction, it corresponds to an edge of $T'$. Since we label $T'$ uniformly at random from the $n-1$ cases, the expected cost of an away-move is $\frac{d(T')}{n-1}$, which equals to $d_2$ by Lemma~\ref{lb4}. Recall that $\sum_{i=1}^{b}w_i=m$. Typically, there are $w_i-1$ away-moves in the $i$-th block with width $w_i$, and then $\sum_{i=1}^{b}(w_i-1)=m-b$ away-moves in total. One special case is that when $w_2=k$, i.e., $(m-l-1)\bmod k=0$, we further reverse the direction of the most left edge in the $2$-nd block, there will be one less away-move in the $2$-nd block, and then we have $m-b-1$ away-moves in total (we can imagine that the $2$-nd block is divided into two blocks, and hence there are $b+1$ blocks in total). For all $k$ cases of $l\in\{1,2,\dots, k\}$, there is only one case with $w_2=k$. Recall that $b=\Ceil{\frac{m-1-l}{k}}+2$. Hence, for all these $k$ cases, the total expected cost of all away-moves are
\begin{align*}
&\sum_{l=1}^{k}\lrA{m-\Ceil{\frac{m-1-l}{k}}-2}d_2-d_2\\
&\leq \sum_{l=1}^{k}\lrA{m-\Ceil{\frac{m-1-l}{k}}-2}d_2\\
&=(l_0-1)\lrA{m-\frac{m-1-l_0}{k}-3}d_2+(k-l_0+1)\lrA{m-\frac{m-1-l_0}{k}-2}d_2\\
&=(k-1)(m-2)d_2,
\end{align*}
where $l_0$ in the second equality is the number satisfying $(m-l_0-1)\bmod k=0$ and $l_0\in\{1,2,\dots,k\}$.

Then, we consider home-moves. Consider an arbitrary block $(e_i,\dots, e_{j})$, where we call the edges $e_i$ and $e_j$ \emph{boundary edges}. Note that the boundary edges may be the same. After Assumption~\ref{ass2}, we have that there are two home-moves corresponding to the \emph{boundary edges}. For example, in Figure~\ref{fig01}, we consider the $1$-st block $(e_1)$ from the $1$-st day to the $2$-nd day. We can see that $t_{20}$ takes a home-move from the venues of $t_{10}$ to $t_{20}$ on the $1$-st day and a home-move from the venues of $t_{20}$ to $t_1$ on the $2$-nd day. Both home-moves correspond to the same boundary edge $e_1$. Note that for the special case with $w_2=k$, we imagine that the $2$-nd block $(e_2,e_3,\dots,e_{k+1})$ is further divided into two blocks $(e_2)$ and $(e_3,\dots,e_{k+1})$, and then we will have two more home-moves corresponding to the edges $e_2$ and $e_3$, respectively. For all $k$ cases of $l\in\{1,2,\dots,k\}$, there must be two cases such that the edge $e_i$ ($3\leq i\leq m-1$) is a boundary edge in some block, one case such that the edge $e_2$ (resp., $e_m$) is the only boundary edge in the $2$-nd block (resp., the last block). Hence, for all these $k$ cases, the total expected cost of all home-moves are
\begin{align*}
&2kd_1+(k+1)d_2+2\sum_{i=3}^{m-1}d_i+(k+1)d_m+(d_2+d_3)\\
&=2kd_1+kd_2+kd_m+2\sum_{i=1}^{m}d_i+d_3-2d_1-d_m\\
&\leq 4kd_1+kd_2+2\sum_{i=1}^{m}d_i\\
&=4kd_1+kd_2+\frac{\Delta}{n-1},
\end{align*}
where the inequality follows from $d_m\leq 2d_1$ and $d_3\leq 2d_1$ by Lemma~\ref{lb2}, and the last equality follows from Lemma~\ref{lb1}.

For all $k$ cases of $l\in\{1,2,\dots,k\}$, there must be one case such that the total expected cost of all moves is bounded by
\begin{align}\label{cost2.1}
4d_1+d_2+\frac{\Delta}{k(n-1)}+\frac{(k-1)(m-2)d_2}{k}.
\end{align}

\textbf{Case~2: $k\geq n/2$.} There are only two blocks: $(e_1)$ and $(e_2,\dots,e_m)$. There are $m-2$ away-moves with an expected cost of $(m-2)d_2$, and four home-moves with an expected cost of $2d_1+d_2+d_m$. Therefore, the total expected cost of all moves is
\begin{align}\label{cost2.2}
2d_1+d_2+d_m+(m-2)d_2=2d_1+(m-1)d_2+d_m\leq 4d_1+(m-1)d_2,
\end{align}
where the inequality follows from $d_m\leq 2d_1$ by Lemma~\ref{lb2}.

\textbf{From the $(n-1)$-th day to the home day.} Since there are $m$ games on the $(n-1)$-th day and all teams are at home on the home day, we know that there are $m$ home-moves corresponding to the edges $e_1,\dots,e_m$, respectively. The total expected cost of all moves is
\begin{align}\label{cost3}
\sum_{i=1}^{m}d_i=\frac{\Delta}{2(n-1)},
\end{align}
where it follows from Lemma~\ref{lb1}.

We can analyze the total expected cost of all moves in the first season.

\medskip
\noindent
\textbf{Case~1: $k\leq n/2$.}
For the best choice of $l$, by (\ref{cost1}), (\ref{cost2.1}), and (\ref{cost3}), the expected cost of moves in the first season is
\begin{align*}
&\frac{\Delta}{2(n-1)}+(n-2)\lrA{4d_1+d_2+\frac{\Delta}{k(n-1)}+\frac{(k-1)(m-2)d_2}{k}}+\frac{\Delta}{2(n-1)}\\
&\leq\frac{\Delta}{n-1}+4(n-2)d_1+(n-1)d_2+(1/k)\Delta+(1-1/k)(m-2)(n-1)d_2\\
&\leq \frac{\Delta}{n-1}+\frac{4(n-2)\Delta}{n(n-1)}+d(T')+(1/k)\Delta+(1-1/k)(m-2)d(T')\\
&= \frac{(5n-8)\Delta}{n(n-1)}+d(T')+(1/k)\Delta+(1-1/k)(m-2)d(T')\\
&\leq (5/n)\Delta+(1/k)\Delta+(1-1/k)md(T),
\end{align*}
where the second inequality follows from $d_1\leq\frac{\Delta}{n(n-1)}$ and $d_2=\frac{d(T')}{n-1}$ by Lemmas~\ref{lb3} and \ref{lb4}, and the last inequality follows from $d(T')\leq 2(1-1/k)d(T')$ and $d(T')\leq d(T)$ by the triangle inequality.

\medskip
\noindent
\textbf{Case~2: $k\geq n/2$.}
For the best choice of $l$, by (\ref{cost1}), (\ref{cost2.2}), and (\ref{cost3}), the expected cost of moves in the first season is
\begin{align*}
&\frac{\Delta}{2(n-1)}+(n-2)\lrA{4d_1+(m-1)d_2}+\frac{\Delta}{2(n-1)}\\
&\leq\frac{\Delta}{n-1}+4(n-2)d_1+(n-1)(m-1)d_2\\
&\leq \frac{\Delta}{n-1}+\frac{4(n-2)\Delta}{n(n-1)}+(m-1)d(T')\\
&= \frac{(5n-8)\Delta}{n(n-1)}+(1-1/m)md(T')\\
&\leq (5/n)\Delta+(1-1/k)md(T),
\end{align*}
where the second inequality follows from $d_1\leq\frac{\Delta}{n(n-1)}$ and $d_2=\frac{d(T')}{n-1}$ by Lemmas~\ref{lb3} and \ref{lb4}, and the last inequality follows from $m=n/2\leq k$ and $d(T')\leq d(T)$ by the triangle inequality.

Note that based on Assumptions~\ref{ass1} and \ref{ass2}, we can easily get that the total expected cost of all moves in the second season is the same as that in the first season. So, the total expected cost is at most $(10/n)\Delta+(2/k)\Delta+(1-1/k)nd(T)$ for $k<n/2$ and $(10/n)\Delta+(1-1/k)nd(T)$ for $k\geq n/2$.

Since we label the vertices on $T'$ uniformly at random from the $n-1$ cases, to derandomize the algorithm, we can simply enumerate all these cases and choose the best one. Consider the running time. In each case, there are $k$ choices of $l$ for $k<n/2$ and one choice of $l$ for $k\geq n/2$. Moreover, for each fixed $l$, the construction takes $O(n^2)$ time. Hence, the running time of the deterministic algorithm is $O(n^3k)$ for $k<n/2$ and $O(n^3)$ for $n\geq k/2$. We can get the following theorem.

\begin{theorem}\label{maincost}
For TTP-$k$ with $k\geq 2$, there is a polynomial-time algorithm that can generate a solution with a weight of at most $(10/n)\Delta+(2/k)\Delta+(1-1/k)nd(T)$, where $T$ is a given Hamiltonian cycle of graph $G$. If $k\geq n/2$, the upper bound can be improved to $(10/n)\Delta+(1-1/k)nd(T)$.
\end{theorem}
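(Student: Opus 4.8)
The plan is: first confirm feasibility, then bound the expected cost of \textrm{CONS}$(\sigma,l)$ over the random labeling $\sigma$ of the cycle $T'$ (for each fixed $l$), and finally combine the probabilistic method with the derandomization and the choice of the best $l$ described in Section~\ref{mainalgorithm}. Feasibility for every $(\sigma,l)$ is already guaranteed by Theorem~\ref{feasibility}, so only the cost bound is at issue. I would begin by invoking Assumptions~\ref{ass1} and \ref{ass2}, which by the triangle inequality do not decrease the cost, in order to (i) insert a home day between the two seasons so that the seasons can be analyzed independently, and (ii) reduce every relevant move to a \emph{home-move} or an \emph{away-move}, where after Assumption~\ref{ass2} an away-move occurs only between two consecutive away games inside a single block of a single season.

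Next I would account for the first season in three phases: the transition from day $0$ to day $1$, the $n-2$ transitions from day $i$ to day $i+1$ for $1\le i\le n-2$, and the transition from day $n-1$ to the inserted home day. For the first and last phases all $m$ moves are home-moves along the edges $e_1,\dots,e_m$, contributing $\sum_{i=1}^m d_i=\Delta/(2(n-1))$ each by Lemma~\ref{lb1} (writing $d_i:=\EE{d_i(j)}$, which is independent of $j$ by Lemma~\ref{b3}); these are exactly (\ref{cost1}) and (\ref{cost3}). For each middle transition, with $l$ fixed, I would count the away-moves --- one per interior edge of each of the $b$ blocks, i.e.\ $\sum_i(w_i-1)=m-b$, with a $-1$ correction when $w_2=k$ --- each with expected cost $d_2=d(T')/(n-1)$ by Lemma~\ref{lb4}; and the home-moves --- two per block at the boundary edges, plus the two extra ones from the split $2$-nd block when $w_2=k$. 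Since $b=\lceil(m-l-1)/k\rceil+2$ depends on $l$ through a ceiling, I would sum the per-$l$ bounds over $l\in\{1,\dots,k\}$ so that the ceilings telescope, obtaining $(k-1)(m-2)d_2$ for the away-moves and, using $d_m\le 2d_1$ and $d_3\le 2d_1$ (Lemma~\ref{lb2}) together with Lemma~\ref{lb1}, $4kd_1+kd_2+\Delta/(n-1)$ for the home-moves; dividing by $k$, the best $l$ achieves at most (\ref{cost2.1}) per middle transition. When $k\ge n/2$ there are only two blocks and the count is immediate, giving (\ref{cost2.2}).

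Then I would add (\ref{cost1}), $(n-2)$ copies of (\ref{cost2.1}) (resp.\ (\ref{cost2.2})), and (\ref{cost3}), and simplify with $d_1\le\Delta/(n(n-1))$ (Lemma~\ref{lb3}), $d_2=d(T')/(n-1)$ (Lemma~\ref{lb4}), $d(T')\le d(T)$, and $1\le 2(1-1/k)$, yielding an expected first-season cost of at most $(5/n)\Delta+(1/k)\Delta+(1-1/k)md(T)$ for $k<n/2$ and $(5/n)\Delta+(1-1/k)md(T)$ for $k\ge n/2$. By Assumptions~\ref{ass1} and \ref{ass2} the second season is symmetric with the same expected cost, so the total expected cost is at most $(10/n)\Delta+(2/k)\Delta+(1-1/k)nd(T)$ (resp.\ $(10/n)\Delta+(1-1/k)nd(T)$). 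Since this is an expectation over the $n-1$ equiprobable labelings of $T'$ (and, for $k<n/2$, a minimum over $k$ values of $l$), some $(\sigma,l)$ attains at most the bound; enumerating all $O(n)$ labelings and $O(k)$ values of $l$, each taking $O(n^2)$ to build a schedule, derandomizes the algorithm in $O(n^3k)$ time, and in $O(n^3)$ time when $k\ge n/2$. The step I expect to be most delicate is the move-bookkeeping per block in the middle transitions --- in particular handling the special cases of Rules~1 and 2 (the reversed leftmost edge of the $2$-nd block when $w_2=k$, and the $r\le k$ correction on day~$1$) without over- or under-counting home-moves, and performing the sum over $l$ so that the $\lceil\cdot\rceil$ terms collapse cleanly; the remainder is a routine substitution of the bounds from Section~\ref{bounds}.
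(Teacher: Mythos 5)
Your proposal is correct and follows essentially the same route as the paper: the same decomposition into the three phases of each season under Assumptions~\ref{ass1} and~\ref{ass2}, the same per-block counting of away-moves and boundary home-moves with the $w_2=k$ correction, the same summation over $l\in\{1,\dots,k\}$ to collapse the ceilings into $(k-1)(m-2)d_2$ and $4kd_1+kd_2+\Delta/(n-1)$, and the same final substitution of Lemmas~\ref{lb1}--\ref{lb4} followed by doubling for the second season and derandomization. No gaps to report.
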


Note that Yamaguchi \emph{et al.}~\cite{yamaguchi2009improved} proposed an algorithm that can generate a solution with a weight of at most $O(1/n)\Delta+(2/k)\Delta+(1-1/k)nd(T)$ for TTP-$k$ with $k\geq 3$. Theorem~\ref{maincost} implies that the construction in this paper has a better upper bound than that in~\cite{yamaguchi2009improved}.
Their algorithm is more complicated, and it is not even easy to compute an upper bound $(c/n)\Delta+(2/k)\Delta+(1-1/k)d(T)$ with some small constant $c$.
Since we have $\Delta=O(k)\cdot\OPT$ by Lemma~\ref{lb6}, their result only implies an $O(1)$-approximation for TTP-$k$ with $k=\Theta(n)$.
It is worth noting that their schedule satisfies an additional mirrored constraint, i.e., the second season is directly arranged by reversing the home venues in the first season. Therefore, we suspect that the constant $c$ in their algorithm is much larger than $10$, even by a tighter analysis.

\subsection{The Approximation Ratio}\label{ourratio}
Recall that the construction is based on a given Hamiltonian cycle $T$ of graph $G$. We simply use the well-known $3/2$-approximation algorithm, which takes $O(n^3)$ time. Then, we have the following lemma.

\begin{lemma}[\cite{christofides1976worst,serdyukov1978some}]\label{tsp}
$d(T)\leq (3/2)d(T^*)$.
\end{lemma}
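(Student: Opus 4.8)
The plan is to analyze the standard Christofides--Serdyukov procedure, which is the well-known $3/2$-approximation algorithm referenced just before the lemma. First I would compute a minimum-weight spanning tree $M$ of $G$, let $O$ be the set of vertices having odd degree in $M$ (so $|O|$ is even by the handshake lemma), compute a minimum-weight perfect matching $N$ on the vertex set $O$ inside the metric graph $G$, and form the multigraph $M\cup N$. This multigraph is connected (it contains $M$) and every vertex has even degree, so it admits an Eulerian circuit $W$. Shortcutting $W$ — traversing $W$ and skipping vertices already visited — produces a Hamiltonian cycle $T$, and the triangle inequality gives $d(T)\le d(W)=d(M)+d(N)$.

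Next I would bound the two summands against $d(T^*)$. For the tree, deleting any single edge of the optimal tour $T^*$ leaves a Hamiltonian path, which is in particular a spanning tree of $G$; hence $d(M)\le d(T^*)$ by optimality of $M$. For the matching, restrict $T^*$ to the vertices of $O$ by shortcutting, obtaining a cycle $C$ through exactly the vertices of $O$ with $d(C)\le d(T^*)$ by the triangle inequality. Since $|O|$ is even, $C$ is an even cycle, so its edge set partitions into two perfect matchings $N_1, N_2$ of $O$ with $d(N_1)+d(N_2)=d(C)$; the cheaper of the two has weight at most $d(C)/2\le d(T^*)/2$, and since $N$ is a minimum-weight perfect matching on $O$, we get $d(N)\le d(T^*)/2$.

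Combining the two estimates yields $d(T)\le d(M)+d(N)\le d(T^*)+\tfrac12 d(T^*)=\tfrac32 d(T^*)$, which is the claimed bound. The dominant cost of the procedure is the minimum-weight perfect matching computation, which runs in $O(n^3)$ time, matching the running time stated in the text preceding the lemma.

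I expect the only non-routine point to be the matching bound $d(N)\le d(T^*)/2$: it relies on the parity observation that a cycle through an even number of vertices decomposes into two edge-disjoint perfect matchings, combined with the fact that shortcutting $T^*$ down to $O$ does not increase its length. The remaining ingredients — the MST lower bound, the shortcutting of the Euler tour, and the final arithmetic — follow directly from the triangle inequality and the optimality of $M$ and $N$, so no further work is needed there.
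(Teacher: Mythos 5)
Your proof is correct and is exactly the standard Christofides--Serdyukov argument (MST plus minimum-weight perfect matching on the odd-degree vertices, Euler tour, shortcut, with the two bounds $d(M)\le d(T^*)$ and $d(N)\le d(T^*)/2$); the paper does not reprove this lemma but simply cites it, and your write-up matches the cited result and its standard proof. No gaps.
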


\begin{theorem}
For TTP-$k$ with $k\geq 2$, there is a polynomial-time $5$-approximation algorithm. If $k\geq n/2$, the approximation ratio can be improved to $4$.
\end{theorem}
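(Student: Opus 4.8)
The plan is to combine Theorem~\ref{maincost} with the two basic lower bounds, Lemmas~\ref{lb5}, \ref{lb6}, and~\ref{lb7}, together with the Christofides--Serdyukov bound in Lemma~\ref{tsp}. The starting point is the weight guarantee of the constructed schedule: at most $(10/n)\Delta+(2/k)\Delta+(1-1/k)nd(T)$ when $k<n/2$, and at most $(10/n)\Delta+(1-1/k)nd(T)$ when $k\geq n/2$. I would bound each of the three (resp.\ two) terms separately in units of $\OPT$ and add up.

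For the case $k\geq n/2$ (aiming for ratio $4$): by Lemma~\ref{lb7} we have $(10/n)\Delta\leq (10/n)\cdot(n/4)\OPT=(5/2)\OPT$; and by Lemmas~\ref{tsp} and~\ref{lb5}, $(1-1/k)nd(T)\leq nd(T)\leq n\cdot(3/2)d(T^*)=(3/2)\cdot nd(T^*)\leq (3/2)\OPT$. Summing gives $(5/2+3/2)\OPT=4\,\OPT$, which is exactly the claimed bound. For the case $k<n/2$ (aiming for ratio $5$): the first term is again $\leq (5/2)\OPT$ by Lemma~\ref{lb7}; the third term is again $\leq (3/2)\OPT$ as above; the only new term is $(2/k)\Delta$, which by Lemma~\ref{lb6} ($\Delta\leq (k/2)\OPT$) is at most $(2/k)\cdot(k/2)\OPT=\OPT$. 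Adding, $(5/2+1+3/2)\OPT=5\,\OPT$. Since the derandomized algorithm runs in polynomial time ($O(n^3k)$ overall, by Theorem~\ref{maincost} and the $O(n^3)$ cost of computing $T$ via Lemma~\ref{tsp}), and returns the best schedule over all enumerated choices, its output has weight at most the expected weight, hence at most $5\,\OPT$ (resp.\ $4\,\OPT$).

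There is essentially no hard step here: the theorem is a bookkeeping corollary of the machinery already built. The one point to be careful about is that the expectation bounds in Theorem~\ref{maincost} are stated for the \emph{randomized} choice of $\sigma$, so I must invoke that the derandomized algorithm enumerates all $n-1$ labelings of $T'$ (and all $l\in\{1,\dots,k\}$ when $k<n/2$) and keeps the minimum, which is therefore no larger than the average; this is already spelled out in Section~\ref{ouranalysis}. A second minor check is that the two regimes $k<n/2$ and $k\geq n/2$ together cover all $k$ with $2\le k<n$, so the ratio $5$ holds for all $k$ and the improved ratio $4$ holds precisely when $k\ge n/2$, matching the statement. No genuine obstacle is expected; the proof is a short arithmetic assembly of the cited bounds.
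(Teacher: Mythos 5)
Your proposal is correct and follows essentially the same route as the paper: apply Theorem~\ref{maincost}, bound each term of the guarantee separately via Lemmas~\ref{lb5}--\ref{lb7} and~\ref{tsp}, and invoke the derandomization over the $n-1$ labelings (and the $k$ choices of $l$). The only cosmetic difference is that for the $(10/n)\Delta$ term in the $k<n/2$ case you use Lemma~\ref{lb7} directly, whereas the paper uses Lemma~\ref{lb6} together with $k<n/2$ to reach the same $(5/2)\OPT$ bound; both are valid.
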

\begin{proof}
First, we consider $k<n/2$. By Theorem~\ref{maincost}, the weight is bounded by $(10/n)\Delta+(2/k)\Delta+(1-1/k)d(T)$. We can get that
\begin{align*}
&(10/n)\Delta+(2/k)\Delta+(1-1/k)nd(T)\\
&\leq (10/n)\Delta+(2/k)\Delta+(3/2)(1-1/k)nd(T^*)\\
&\leq (5k/n)\cdot \OPT+\OPT+(3/2)(1-1/k)\cdot \OPT\\
&\leq (5/2)\cdot\OPT+\OPT+(3/2)\cdot\OPT\\
&=5\cdot\OPT,
\end{align*}
where the first inequality follows from $d(T)\leq (3/2)d(T^*)$ by Lemma~\ref{tsp}, the second inequality follows from $nd(T^*)\leq\OPT$ and $\Delta\leq (k/2)\cdot \OPT$ by Lemmas~\ref{lb5} and \ref{lb6}, and the last inequality follows from $k<n/2$.

Then, we consider $k\geq n/2$. By Theorem~\ref{maincost}, the weight is bounded by
\begin{align*}
&(10/n)\Delta+(1-1/k)nd(T)\\
&\leq (10/n)\Delta+(3/2)nd(T^*)\\
&\leq (5/2)\cdot \OPT+(3/2)\cdot \OPT\\
&=4\cdot \OPT,
\end{align*}
where the first inequality follows from $d(T)\leq (3/2)d(T^*)$ by Lemma~\ref{tsp}, and the second inequality follows from $nd(T^*)\leq\OPT$ and $\Delta\leq (n/4)\cdot \OPT$ by Lemmas \ref{lb5} and \ref{lb7}.
\end{proof}

\section{Conclusion}\label{conclusion}
In this paper, we present a 5-approximation algorithm for TTP-$k$, which not only addresses previous flaws but also significantly improves previous results.
In our algorithm, we simply use the simple 3/2-approximation algorithm for TSP.
Recently, the TSP ratio was slightly improved to $3/2-\varepsilon$, where the improvement $\varepsilon$ is about $10^{-36}$ \cite{DBLP:conf/stoc/KarlinKG21}. By utilizing the improved approximation algorithm for TSP, we may also be able to slightly improve the result.
For the case $k=n-1$, our ratio is worse than the current-best ratio $2.75$ in~\cite{imahori2010approximation}.
It would be interesting to know whether our technique in this paper can also be used to improve the ratio for the case $k=n-1$.

\section*{Acknowledgments}
The work is supported by the National Natural Science Foundation of China, under grant 61972070.

\section*{Declaration of competing interest}
The authors declare that they have no known competing financial interests or personal relationships that could have appeared to influence the work reported in this paper.

\bibliographystyle{plain}
\bibliography{main}

\begin{thebibliography}{10}

\bibitem{anagnostopoulos2006simulated}
Aris Anagnostopoulos, Laurent Michel, Pascal Van~Hentenryck, and Yannis
  Vergados.
\newblock A simulated annealing approach to the traveling tournament problem.
\newblock {\em Journal of Scheduling}, 9(2):177--193, 2006.

\bibitem{2022apx}
Salomon Bendayan, Joseph Cheriyan, and Kevin K.~H. Cheung.
\newblock Unconstrained traveling tournament problem is apx-complete.
\newblock {\em CoRR}, abs/2212.09165, 2022.

\bibitem{bhattacharyya2016complexity}
Rishiraj Bhattacharyya.
\newblock Complexity of the unconstrained traveling tournament problem.
\newblock {\em Operations Research Letters}, 44(5):649--654, 2016.

\bibitem{DBLP:journals/eor/BulckGSG20}
David~Van Bulck, Dries~R. Goossens, J{\"{o}}rn Sch{\"{o}}nberger, and Mario
  Guajardo.
\newblock Robinx: {A} three-field classification and unified data format for
  round-robin sports timetabling.
\newblock {\em Eur. J. Oper. Res.}, 280(2):568--580, 2020.

\bibitem{DBLP:journals/corr/abs-2110-02300}
Diptendu Chatterjee.
\newblock Complexity of traveling tournament problem with trip length more than
  three.
\newblock {\em CoRR}, abs/2110.02300, 2021.

\bibitem{DBLP:conf/atmos/ChatterjeeR21}
Diptendu Chatterjee and Bimal~Kumar Roy.
\newblock An improved scheduling algorithm for traveling tournament problem
  with maximum trip length two.
\newblock In {\em {ATMOS} 2021}, volume~96, pages 16:1--16:15, 2021.

\bibitem{christofides1976worst}
Nicos Christofides.
\newblock Worst-case analysis of a new heuristic for the travelling salesman
  problem.
\newblock Technical report, Carnegie-Mellon Univ Pittsburgh Pa Management
  Sciences Research Group, 1976.

\bibitem{de1988some}
Dominique de~Werra.
\newblock Some models of graphs for scheduling sports competitions.
\newblock {\em Discrete Applied Mathematics}, 21(1):47--65, 1988.

\bibitem{di2007composite}
Luca Di~Gaspero and Andrea Schaerf.
\newblock A composite-neighborhood tabu search approach to the traveling
  tournament problem.
\newblock {\em Journal of Heuristics}, 13(2):189--207, 2007.

\bibitem{duran2021sports}
Guillermo Dur{\'a}n.
\newblock Sports scheduling and other topics in sports analytics: a survey with
  special reference to latin america.
\newblock {\em Top}, 29(1):125--155, 2021.

\bibitem{easton2001traveling}
Kelly Easton, George Nemhauser, and Michael Trick.
\newblock The traveling tournament problem: description and benchmarks.
\newblock In {\em 7th International Conference on Principles and Practice of
  Constraint Programming}, pages 580--584, 2001.

\bibitem{easton2003solving}
Kelly Easton, George Nemhauser, and Michael Trick.
\newblock Solving the travelling tournament problem: a combined integer
  programming and constraint programming approach.
\newblock In {\em 4th International Conference of Practice and Theory of
  Automated Timetabling IV}, pages 100--109, 2003.

\bibitem{goerigk2014solving}
Marc Goerigk, Richard Hoshino, Ken-ichi Kawarabayashi, and Stephan Westphal.
\newblock Solving the traveling tournament problem by packing three-vertex
  paths.
\newblock In {\em Twenty-Eighth AAAI Conference on Artificial Intelligence},
  pages 2271--2277, 2014.

\bibitem{DBLP:journals/anor/GoerigkW16}
Marc Goerigk and Stephan Westphal.
\newblock A combined local search and integer programming approach to the
  traveling tournament problem.
\newblock {\em Ann. Oper. Res.}, 239(1):343--354, 2016.

\bibitem{goerigk2016combined}
Marc Goerigk and Stephan Westphal.
\newblock A combined local search and integer programming approach to the
  traveling tournament problem.
\newblock {\em Annals of Operations Research}, 239(1):343--354, 2016.

\bibitem{2012LDTTP}
Richard Hoshino and Ken-ichi Kawarabayashi.
\newblock Generating approximate solutions to the {TTP} using a linear distance
  relaxation.
\newblock {\em J. Artif. Intell. Res.}, 45:257--286, 2012.

\bibitem{hoshino2013approximation}
Richard Hoshino and Ken-ichi Kawarabayashi.
\newblock An approximation algorithm for the bipartite traveling tournament
  problem.
\newblock {\em Mathematics of Operations Research}, 38(4):720--728, 2013.

\bibitem{imahori20211+}
Shinji Imahori.
\newblock A {1+O(1/N)} approximation algorithm for {TTP(2)}.
\newblock {\em CoRR}, abs/2108.08444, 2021.

\bibitem{imahori2010approximation}
Shinji Imahori, Tomomi Matsui, and Ryuhei Miyashiro.
\newblock A 2.75-approximation algorithm for the unconstrained traveling
  tournament problem.
\newblock {\em Annals of Operations Research}, 218(1):237--247, 2014.

\bibitem{DBLP:conf/stoc/KarlinKG21}
Anna~R. Karlin, Nathan Klein, and Shayan~Oveis Gharan.
\newblock A (slightly) improved approximation algorithm for metric {TSP}.
\newblock In {\em {STOC} 2021}, pages 32--45, 2021.

\bibitem{kendall2010scheduling}
Graham Kendall, Sigrid Knust, Celso~C Ribeiro, and Sebasti{\'a}n Urrutia.
\newblock Scheduling in sports: An annotated bibliography.
\newblock {\em Computers \& Operations Research}, 37(1):1--19, 2010.

\bibitem{lim2006simulated}
Andrew Lim, Brian Rodrigues, and Xingwen Zhang.
\newblock A simulated annealing and hill-climbing algorithm for the traveling
  tournament problem.
\newblock {\em European Journal of Operational Research}, 174(3):1459--1478,
  2006.

\bibitem{miyashiro2012approximation}
Ryuhei Miyashiro, Tomomi Matsui, and Shinji Imahori.
\newblock An approximation algorithm for the traveling tournament problem.
\newblock {\em Annals of Operations Research}, 194(1):317--324, 2012.

\bibitem{serdyukov1978some}
Anatolii~Ivanovich Serdyukov.
\newblock Some extremal bypasses in graphs.
\newblock {\em Upravlyaemye Sistemy}, 17:76--79, 1978.

\bibitem{thielen2011complexity}
Clemens Thielen and Stephan Westphal.
\newblock Complexity of the traveling tournament problem.
\newblock {\em Theoretical Computer Science}, 412(4):345--351, 2011.

\bibitem{thielen2012approximation}
Clemens Thielen and Stephan Westphal.
\newblock Approximation algorithms for \protect{TTP}(2).
\newblock {\em Mathematical Methods of Operations Research}, 76(1):1--20, 2012.

\bibitem{trick2007challenge}
Michael Trick.
\newblock Challenge traveling tournament instances.
\newblock Accessed: 2022-4-01, 2022.

\bibitem{westphal2014}
Stephan Westphal and Karl Noparlik.
\newblock A 5.875-approximation for the traveling tournament problem.
\newblock {\em Annals of Operations Research}, 218(1):347--360, 2014.

\bibitem{DBLP:conf/mfcs/XiaoK16}
Mingyu Xiao and Shaowei Kou.
\newblock An improved approximation algorithm for the traveling tournament
  problem with maximum trip length two.
\newblock In {\em {MFCS} 2016}, volume~58, pages 89:1--89:14, 2016.

\bibitem{yamaguchi2009improved}
Daisuke Yamaguchi, Shinji Imahori, Ryuhei Miyashiro, and Tomomi Matsui.
\newblock An improved approximation algorithm for the traveling tournament
  problem.
\newblock {\em Algorithmica}, 61(4):1077--1091, 2011.

\bibitem{DBLP:conf/cocoon/ZhaoX21}
Jingyang Zhao and Mingyu Xiao.
\newblock A further improvement on approximating {TTP-2}.
\newblock In {\em {COCOON} 2021}, volume 13025 of {\em Lecture Notes in
  Computer Science}, pages 137--149, 2021.

\bibitem{DBLP:conf/ijcai/ZhaoX21}
Jingyang Zhao and Mingyu Xiao.
\newblock The traveling tournament problem with maximum tour length two: {A}
  practical algorithm with an improved approximation bound.
\newblock In {\em {IJCAI} 2021}, pages 4206--4212, 2021.

\bibitem{2023ttp2}
Jingyang Zhao and Mingyu Xiao.
\newblock Practical algorithms with guaranteed approximation ratio for {TTP}
  with maximum tour length two.
\newblock {\em CoRR}, abs/2212.12240, 2022.

\bibitem{zhao2022improved}
Jingyang Zhao, Mingyu Xiao, and Chao Xu.
\newblock Improved approximation algorithms for the traveling tournament
  problem.
\newblock In {\em {MFCS} 2022,}, volume 241 of {\em LIPIcs}, pages 83:1--83:15,
  2022.

\end{thebibliography}

\newpage
\appendix

\section{The Flaw in the Previous Construction Algorithm}\label{previousconstruction}
Westphal and Noparlik's construction~\cite{westphal2014} only makes sure that the direction of the most left edge (incident on team $t_n$) changes every $k$-th day in the first season. Note that in our construction we deal with one more case: if $r\leq k<n-1$ we further reverse the most left edge on the 1-st day.

We show that team $t_n$ may play more than $k$-consecutive home/away games in Westphal and Noparlik's construction, thus leading to an infeasible schedule for TTP-$k$.

\begin{lemma}\label{infeasibility}
In Westphal and Noparlik's construction, team $t_n$ plays more than $k$-consecutive home games if and only if $r\leq k<n-1$.
\end{lemma}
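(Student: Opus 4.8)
The plan is to trace the home/away pattern of team $t_n$ through the whole schedule and determine exactly when a run of home games longer than $k$ appears. In Westphal and Noparlik's construction, Rule~1 fixes the direction of the $1$-st block (the only block containing an edge incident on $t_n$) and its alternation concerns blocks $2,\dots,b$ only, so $t_n$'s pattern is governed solely by Rule~2, which in their version --- without our extra first-day reversal --- flips the lone edge incident on $t_n$ every $k$-th day. Writing $p_1p_2\cdots p_{n-1}$ for $t_n$'s string in the first season ($A$ for away, $H$ for home) and fixing the first block as \emph{away}, as in $t_{20}\to t_{10}$ of Figure~\ref{fig01}, this string is the length-$(n-1)$ prefix of $A^kH^kA^kH^k\cdots$. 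Setting $n-1=2kq+r$ with $1\le r\le 2k-1$ (so $q=\lfloor (n-1)/(2k)\rfloor$; $r$ is odd since $n$ is even), this prefix equals $(A^kH^k)^q A^r$ when $r\le k$ and $(A^kH^k)^q A^kH^{r-k}$ when $r>k$, and in either case the first season alone contains no home run longer than $k$.

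Next I would assemble the whole-schedule string. From the arrangement $\Gamma_1\cdots\Gamma_{n-1}\cdot\overline{\Gamma_{n-2}\cdot\Gamma_{n-1}\cdot\Gamma_1\cdots\Gamma_{n-3}}$, the string of $t_n$ is $p_1\cdots p_{n-1}$ followed by $\bar p_{n-2},\bar p_{n-1},\bar p_1,\bar p_2,\dots,\bar p_{n-3}$, where $\bar p_j$ is the complement ($H\leftrightarrow A$) of $p_j$. The key observation is that reversing the first away block turns $\bar p_1\bar p_2\cdots$ into $H^kA^kH^k\cdots$, so at the start of the second season (right after the two ``wrap-around'' days $\bar p_{n-2},\bar p_{n-1}$) there is a home block of length close to $k$; and this block fuses with the home game $\bar p_{n-1}$ immediately to its left exactly when $p_{n-1}=A$, which holds exactly when the first-season string ends with an away game, i.e.\ exactly when $r\le k$.

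For the ``if'' direction, assume $r\le k$ and $k<n-1$. Then $q\ge 1$ (otherwise $n-1=r\le k<n-1$), so $n-1\ge 2k+1$ and $n-3\ge k$; hence the block $\bar p_1\cdots\bar p_k=H^k$ lies entirely within the second season, and since $\bar p_{n-1}=H$ (as $r\le k$), team $t_n$ plays at least $k+1$ consecutive home games, so the schedule is infeasible. For the ``only if'' direction I would establish the contrapositive --- that if $r>k$, or if $r\le k$ with $q=0$ (equivalently $k=n-1$), then no home run exceeds $k$. The sub-case $k=n-1$ is immediate, since then the whole string is $A^k$ followed by $H^k$. When $r>k$, one has $\bar p_{n-1}=A$, so the home block coming from the reversed first away block is bounded on the left by this $A$ and on the right by $\bar p_{k+1}=A$ (or by the end of the schedule), giving it length at most $k$; the remaining home blocks of the second-season string --- which is $\bar p_{n-2}\bar p_{n-1}$ followed by the complement of a prefix of $A^kH^kA^k\cdots$ --- all have length at most $k$ as well, and the season seam glues together a home block of length only $r-k\le k-1$ (if $r-k\ge 2$) or $2$ (if $r-k=1$). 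Hence no home run exceeds $k$.

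The step that needs the most care --- and the place where the hypotheses genuinely matter --- is the bookkeeping at the two seams: the junction of the two seasons (where $p_{n-1}$ sits two places before $\bar p_{n-1}$, separated by $\bar p_{n-2}$) and the wrap-around inside the second season (where $\bar p_{n-1}$ abuts $\bar p_1$). One must verify that in the infeasible regime the reversed home block $H^k$ truly fits inside the $n-1$ days of the second season, which is precisely where the hypothesis $k<n-1$ (forcing $q\ge1$) enters, and that in the feasible regime the short leftover block $H^{r-k}$ at the end of season one, together with the isolated home games manufactured at the seam, never accumulate past $k$. Everything else is a routine case split on $r=(n-1)\bmod 2k$.
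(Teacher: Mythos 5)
Your proof is correct and follows essentially the same route as the paper's: write out $t_n$'s home/away string in each season as a prefix of $A^kH^k\cdots$ and its complement, then check the two seams, with the decisive case split on $r$ versus $k$ (equivalently, on whether $p_{n-1}=A$ so that $\bar p_{n-1}=H$ fuses with the block $\bar p_1\cdots\bar p_k=H^k$). Your explicit handling of the $k=n-1$ subcase in the ``only if'' direction is in fact slightly more careful than the paper's, which disposes of it only implicitly via the remark ``$n-1>2k$'' and the corresponding discussion in Lemma~\ref{feasibility1}.
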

\begin{proof}
By a similar argument in the proof of Lemma~\ref{feasibility1}, we consider the following two cases. Note that the direction of the most left edge changes every $k$-th day.

\textbf{Case~1: $r>k$.} The home/away pattern of team $t_n$ in the first season is $A^kH^k\cdots A^kH^kA^kH^{r-k}$. In the second season, if $r-k=1$, the pattern is $HAH^kA^k\cdots H^kA^kH^{k-1}$; otherwise, the pattern is $A^2H^kA^k\cdots H^kA^kH^kA^{r-k-2}$. In both cases, the patterns in these two seasons can be combined well without creating more than $k$-consecutive home/away games.

\textbf{Case~2: $r\leq k$.} Similarly, we have $n-1>2k$. The home/away pattern of team $t_n$ in the first season is $A^kH^k\cdots A^kH^kA^r$. In the second season, if $r=1$, the pattern is $AHH^kA^k\cdots H^kA^{k-1}$; otherwise, the pattern is $H^2H^kA^k\cdots H^kA^kH^{r-2}$. In both cases, the pattern in the second season can create more than $k$-consecutive home games.
\end{proof}

Lemma~\ref{infeasibility} shows that when $r\leq k<n-1$, the construction in~\cite{westphal2014} is infeasible. Note that if $k=O(1)$, the construction is infeasible for at least half of the instances. Their construction was also considered for $k=2$ in \cite{thielen2012approximation}, where there are two cases: even $n/2$ and odd $n/2$. But, it fails for the case of odd $n/2$ with the same reason. Some experimental algorithms \cite{thielen2012approximation,westphal2014,goerigk2016combined} take the output of their constructions as an initial solution, and then optimize it using some heuristic methods. However, we are unaware whether the experimental results are correct due to this flaw.

\section{The Flaw in the Previous Analysis}\label{previousanalysis}
Westphal and Noparlik proved an approximation ratio of $2+2k/n+k/(n-1)+3/n+3/(2\cdot k)$~\cite{westphal2014}. For any constant $k>10$, the approximation ratio achieves $2+3/(2\cdot k)+O(1/n)$, which is even better than the approximation ratio $(5k-7)/(2k)+O(1/n)$ in~\cite{yamaguchi2009improved}. Note that the latter approximation ratio is obtained using some stronger lower bounds than the former one. Indeed, there is a flaw in Westphal and Noparlik's analysis. In their analysis, the cost of the schedule was divided into several parts (note that our analysis is based on a different framework).

\begin{lemma}[\cite{westphal2014}]\label{panalysis}
Given a Hamiltonian cycle $T$ of graph $G$, the cost of the schedule is bounded by $C_h+C_a+C_s+C_l+C_r+C_o$, where
\begin{itemize}
\item $C_h\leq (2/n)\Delta$;
\item $C_a\leq (2/n)\Delta$;
\item $C_s\leq \frac{2(\Delta-2s(n))}{n-1}$;
\item $C_l\leq 2d(T)$;
\item $C_r=2\cdot\lrA{\sum_{i=1}^{n/2}d(i,i+n/2-1)+\sum_{i=n/2+1}^{n}d(i,i-n/2)}$;
\item $C_o\leq \frac{2\Delta+(n-2)d(T)}{k}$.
\end{itemize}
\end{lemma}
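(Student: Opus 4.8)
The plan is to reconstruct the original accounting of Westphal and Noparlik: assign to every team each of its day-to-day moves (including the move out of home before day~$1$ and back home after day~$2(n-1)$), classify each move, and then partition the whole collection into the six buckets $C_h,C_a,C_s,C_l,C_r,C_o$. As a preliminary step I would, exactly as in Section~\ref{ouranalysis}, use the triangle inequality to insert a home day between the two seasons and to send a team home immediately before and after it meets $t_n$; this only increases the total cost, and afterwards every move is either a \emph{home-move} (one endpoint is the moving team's own home) or an \emph{away-move} between two of that team's opponents that are consecutive within a single block of the rotation. I would also record the structural facts used throughout: the rotation decomposes $G$ into the perfect matchings $\Gamma_1,\dots,\Gamma_{n-1}$; on each day the $m$ games split into blocks of widths $w_1=1$, $w_2$, $w_3=\dots=w_{b-1}=k$, $w_b=l$; team $t_n$ is fixed in the leftmost slot with its incident edge reversing every $k$-th day; and $t_n$ is chosen so that $s(n)\le\Delta/n$.

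For the home-moves I would split into four kinds. (i) The moves leaving home on the first day of a season and returning home on the last day of a season: these are edge-weights of the day-$1$ and day-$(n-1)$ matchings of each season counted twice, and since $\Gamma_1,\dots,\Gamma_{n-1}$ partition $G$ they are absorbed into $C_h\le(2/n)\Delta$ and $C_a\le(2/n)\Delta$. (ii) The home-moves at the two boundaries of the $k$-wide blocks: passing from one block boundary to the next telescopes along the Hamiltonian cycle, giving $C_l\le 2d(T)$ (one sweep per season). (iii) The home-moves at the boundaries of the wide second block, whose ``far'' boundary edge is the roughly $n/2$-apart edge of the rotation; summing these over all teams and days gives exactly the explicit diagonal sum $C_r$. (iv) The extra home-moves forced by $t_n$'s edge flipping every $k$-th day: there are $O(n/k)$ such flips, each paid for by an incident-edge move and a cycle edge, yielding $C_o\le(2\Delta+(n-2)d(T))/k$.

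For the away-moves, after the reductions an away-move happens only between two consecutive away games inside one block, and by the structure of the rotation its two endpoints are consecutive vertices of $T'$; hence each away-move is an edge of $T'$. Counting these moves block by block over both seasons and bounding $d(T')\le(\Delta-2s(n))/2$ (the total edge weight of $G-n$), the away-move contribution is at most $2(\Delta-2s(n))/(n-1)=C_s$. Collecting the six bounds gives the claim.

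The step I expect to be the crux — and the place where, as the remainder of this section shows, the original argument slips — is item~(iii): pinning down precisely which boundary home-moves are genuinely ``diagonal'' and must be charged to $C_r$, versus which telescope cleanly along $T$ into $C_l\le 2d(T)$, and keeping the multiplicities straight. The edges $d(i,i+n/2-1)$ need not individually behave like $O(d(T)/n)$, so $C_r$ cannot simply be absorbed into a $d(T)$-term; that is exactly the subtlety one must be careful about both when reproducing and when using this lemma.
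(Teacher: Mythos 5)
There is a fundamental mismatch here: the paper does not prove Lemma~\ref{panalysis} at all. It quotes the lemma from \cite{westphal2014} precisely in order to refute it --- the remainder of Appendix~\ref{previousanalysis} is a counterexample (the unit metric with $k=\Theta(n)$) in which the six stated bounds sum to $\Theta(n)$ while every feasible schedule costs $\Theta(n^2)$. So no correct proof of the statement as written can exist, and your attempt to supply one necessarily breaks somewhere.

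The concrete break is in your away-move accounting. After the two reductions, each away-move is indeed an edge of $T'$ with expected cost $d(T')/(n-1)$, but there are $m-b\approx (n/2)(1-1/k)$ of them per day-transition and about $2(n-2)$ day-transitions, so their total expected cost is $\Theta((1-1/k)\,n\,d(T'))$ --- exactly the dominant $(1-1/k)n\,d(T)$ term in Theorem~\ref{maincost}. You charge all of this to $C_s\leq 2(\Delta-2s(n))/(n-1)$, which in the unit-metric counterexample equals $2(n-2)=\Theta(n)$, a factor of $\Theta(n)$ too small. (Your item~(iv) for $C_o$ likewise only restates the target: ``$O(n/k)$ flips, each paid for by an incident-edge move and a cycle edge'' cannot produce a term of the form $(n-2)d(T)/k$.) You also sensed that the original argument slips but guessed the wrong culprit: the paper locates the flaw not in $C_r$ but in the $C_o$ bound, whose repaired form is $C_o\leq (2/k)\Delta+(n-2)d(T)$, i.e., the $d(T)$ term must not be divided by $k$; with that repair the decomposition does cover the schedule cost but yields a ratio of about $6.667$ rather than the claimed $5.875$.
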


To see the flaw, we give a counterexample. We consider TTP-$k$ with $k=\Theta(n)$ and the complete graph $G$ where the length of each edge is a unit.
We can get $s(n)=n-1$ and $\Delta=n(n-1)$. Note that the length of any Hamiltonian cycle of graph $G$ is $n$. Hence, we have $d(T)=n$. Therefore, by Lemma~\ref{panalysis}, we have
\begin{align*}
&C_h+C_a+C_s+C_l+C_r+C_o\\
&\leq (2/n)\Delta+(2/n)\Delta+\frac{2(\Delta-2s(n))}{n-1}+2d(T)\\
&\quad\ +2\cdot\lrA{\sum_{i=1}^{n/2}d(i,i+n/2-1)+\sum_{i=n/2+1}^{n}d(i,i-n/2)}+\frac{2\Delta+(n-2)d(T)}{k}\\
&=2(n-1)+2(n-1)+2(n-2)+2n+2n+\frac{2n(n-1)+n(n-2)}{k}\\
&=\Theta(n).
\end{align*}
For this example, the analyzed cost of the schedule is bounded by $\Theta(n)$. However, it is easy to see that in an optimal schedule, the traveling distance of each team is $\Theta(n)$, and then the total traveling distance of all teams is $\Theta(n^2)$, a contradiction.

The flaw is due to the analysis of the cost $C_o$ and the cost of this part is also the main cost of the schedule. Note that a simple refined analysis may lead us to get
\[
C_o\leq (2/k)\Delta+(n-2)d(T),
\]
with an approximation ratio $7/2+2k/n+k/(n-1)+3/n$ instead of $2+2k/n+k/(n-1)+3/n+3/(2\cdot k)$, i.e., the term $3/(2\cdot k)$ in their claimed approximation ratio should be $3/2$. When $k=\Theta(n)$, the approximation ratio is bounded by $6.5+O(1/n)$ instead of the claimed $5+O(1/n)$. For any $n>k\geq 4$ and $n\geq 6$, the approximation ratio is bounded by $6.667$ instead of the claimed $5.875$.
\end{document}